\algrenewcommand\ALG@beginalgorithmic{\small}
\algrenewcommand\alglinenumber[1]{\scriptsize #1:}
\newcommand{\multiline}[1]{
  \begin{tabularx}{\dimexpr\linewidth-\ALG@thistlm}[t]{@{}X@{}}
    #1
  \end{tabularx}
}
\newif\ifcomment
\newcommand{\junk}[1]{}
\newcommand{\suchthat}
{\;\ifnum\currentgrouptype=16 \middle\fi|\;}
\algnewcommand{\LineComment}[1]{\State \color{blue} \(\triangleright\) #1 \color{black}}
\algnewcommand{\BComment}[1]{\color{blue}  \Comment{#1} \color{black}}
\newcommand{\synchAlgo}{\mathcal{A}_{sync}}
\newcommand{\NonsynchAlgo}{\mathcal{A}_{semi}}
\newcommand{\synchronizer}{$\delta$}
\newcommand{\synchGraph}{\mathcal{G}_{sync}}
\newcommand{\synchGraphSnap}{H}
\newcommand{\NonsynchGraph}{\mathcal{G}_{semi}}
\newcommand{\mc}[1]{}
\title{Synchronization in Anonymous Networks Under Arbitrary Dynamics}
\author{Rida Bazzi}
\email{bazzi@asu.edu}
\author{Cameron Bickley}
\email{cpbickle@asu.edu}
\author{Anya Chaturvedi}
\email{anya.chaturvedi@asu.edu}
\author{Andr\'ea W. Richa}
\email{aricha@asu.edu}
\author{Peter Vargas}
\email{pjvargas@asu.edu}
\affiliation{\institution{School of Computing and Augmented Intelligence, Arizona State University}
\city{Tempe, AZ}
\country{USA}}
\keywords{Synchronization, Anonymous Dynamic Networks, Arbitrary Dynamics}
\begin{document}

\begin{abstract}
We present the $\delta$-Synchronizer, which works in non-synchronous dynamic networks under minimal assumptions. Our model allows for arbitrary topological changes without any guarantee of eventual global or partial stabilization and assumes that nodes are anonymous. This deterministic synchronizer is the first that enables nodes to simulate a dynamic network synchronous algorithm for executions in a semi-synchronous dynamic environment under a weakly-fair node activation scheduler, despite the absence of a global clock, node ids, persistent connectivity or any assumptions about the edge dynamics (in both the synchronous and semi-synchronous environments). 
We make the following contributions: (1) we extend the definition of synchronizers to networks with arbitrary edge dynamics; (2) we present the first synchronizer from the semi-synchronous to the synchronous model in such networks;
and  (3) we present non-trivial applications of the proposed synchronizer to existing algorithms. We assume an extension of the {\sc Pull} communication model by adding a single 1-bit multi-writer atomic register at each edge-port of a node. We show that this extension is needed and that synchronization in our setting is not possible without it. 
The $\delta$-Synchronizer operates with a multiplicative memory overhead at the nodes that is asymptotically logarithmic on the runtime of the underlying synchronous algorithm being simulated---in particular, it is logarithmic for polynomial-time synchronous algorithms. 
\end{abstract}

\maketitle

\section{Introduction}
\label{sec:intro}
Modern distributed systems, such as wireless sensor networks, mobile peer-to-peer systems, and biologically inspired swarm networks, often exhibit constantly changing and unpredictable communication dynamics that make achieving coordinated behavior between agents challenging, even for simple tasks. 
Achieving coordinated behavior in such a setting can be further compounded by asynchrony and agent anonymity (no identifiers). 
To simplify the design of distributed algorithms, researchers developed {\em synchronizers} that can transform algorithms designed under strong synchrony assumptions into algorithms that work correctly under weaker assumptions~\cite{awerbuch1985-complexity,shabtay1994-lowcomplex,ghaffari2023-nearoptimal}. 
Previous work on synchronizers considers systems in which agents have unique identifiers and the network is either static or is dynamic for some time but eventually stabilizes. In contrast, this paper 
considers a system of anonymous agents that communicate through a network with {\em arbitrary} {\em dynamics}, i.e., with {\em no restrictions on topological changes}, including {\em no assumptions on
eventual (local or global) stabilization}. 

In such networks, three different synchrony models are considered~\cite{Flocchini_Prencipe_Santoro_2019}. In the {\em synchronous} model, time is divided into stages and all nodes are active in every stage. 
 In the {\em semi-synchronous model} (see, e.g.,~\cite{flocchini2020-semisync}), time is also divided into stages, but some nodes might not be active in a given stage. In both models, an active node executes
 one action---which involves communication with their neighbors and a bounded amount of computation---per stage, and topology changes occur only at the beginning of a stage. In other words, the {\em semi-synchronous} mode is one where time is synchronous but nodes are activated asynchronously. In the {\em asynchronous} model, there is no synchronization of time nor node activations, so actions can take an arbitrary bounded amount of time to execute, and topological changes and node activations can happen at arbitrary times.

In this paper, we introduce our \emph{Arbitrary-Dynamics Synchronizer}, or {\em \synchronizer -Synchronizer} for short. The \synchronizer -Synchronizer is the first {\em deterministic} synchronizer that allows algorithms designed for a synchronous anonymous network with arbitrary adversarial dynamics to execute correctly in a semi-synchronous anonymous, arbitrary dynamics network under a {\em weakly-fair node activation scheduler}.
Specifically,  {\em our \synchronizer-Synchronizer transforms any algorithm $\synchAlgo$ designed for a 
{\em synchronous} dynamic network under arbitrary edge-dynamics given by a time-varying graph $\synchGraph$ into an algorithm $\NonsynchAlgo$
that correctly simulates $\synchAlgo$ under a {\em weakly-fair 
scheduler} in a {\em semi-synchronous} dynamic network under arbitrary edge-dynamics given by a time-varying graph $\NonsynchGraph$}.

Unlike other synchronizers that assume eventual stabilization and for which one can compare executions of the original algorithm and the transformed algorithm on the same stabilized network, in our setting, there is no guarantee that the network ever stabilizes and there is no guarantee that $\synchGraph$ and $\NonsynchGraph$ are identical, which should not be surprising. This necessitates a {\em non-triviality} requirement on synchronizers for networks with arbitrary dynamics. 
The overall requirements for such synchronizers are captured by the following three general conditions (in bold), 
which we then indicate how they are satisfied by our \synchronizer -Synchronizer. 
\begin{itemize}
        \item {\bf (Correctness) The simulated synchronous execution is valid}: We show that for any $\NonsynchGraph$ and $i\geq 0$, there exists a $\synchGraph$ such that the state of each node at the end of phase $i$ (where phases are maintained by the synchronizer) of the semi-synchronous execution of $\NonsynchAlgo$ under $\NonsynchGraph$ and a weakly-fair scheduler is equal to the state of each node at the end of $i$-th step of a synchronous execution of $\synchAlgo$ under $\synchGraph$. ({Theorem~\ref{thm:safety}})
         
         \item 
         {\bf (Non-triviality) Every possible outcome of a synchronous execution can be simulated}:
          We show
          that  
         for any $\synchGraph$ and $i\geq 0$, there exists a $\NonsynchGraph$ such that the state of each node at the end of step $i$ of the synchronous execution of $\synchAlgo$ under $\synchGraph$, is equal to the state of each node at the end of phase $i$  in a semi-synchronous execution of $\NonsynchAlgo$ under $\NonsynchGraph$ and a weakly-fair scheduler. ({Theorem~\ref{thm:outcomes}})
         
         \item{\bf (Finite termination) If the synchronous algorithm always terminates in finite time, so do the simulated executions}: We show that the 
         synchronous execution of $\synchAlgo$ terminates in finite time for $\synchGraph$ if and only if the semi-synchronous execution of $\NonsynchAlgo$ terminates in finite time for $\NonsynchGraph$, which implies the condition.
         (Theorem~\ref{thm:liveness})
    \end{itemize}
         While the {\em non-triviality} requirement rules out trivial solutions, e.g., in which $\NonsynchGraph$ always contains no edges regardless of the actual dynamics, our transformation actually satisfies a {\bf strong non-triviality requirement}: 
     Any edge $(u,v)$ that persists long enough for both nodes $u$ and $v$ to be activated at least once (in two separate stages), during a phase $i\geq 0$ of the semi-synchronous execution {\em must}  be part of $\synchGraph$ during the $i$-th step of the synchronous execution. 
     In particular, if 
     $\NonsynchGraph$ is static, then the simulation guarantees that $\synchGraph = \NonsynchGraph$. 
     
    We assume the {\sc Pull} model of communication (see, e.g.,~~\cite{besta17-pushpull}), with the addition of a {\em 1-bit multi-writer atomic register at each edge port} of every node. Each node is assumed to have a {\em disconnection detector}, as in~\cite{daymude2022-mutex, iyer2003dual, russell2001ethtool}, which indicates which ports experienced edge disconnection since the last time the node was activated. 
    An
    extension 
    of the {\sc Pull} model is necessary since we 
    the standard {\sc Pull} model (and similarly the standard {\sc Push} model) is not powerful enough to support a deterministic synchronizer in our setting, 
     even if nodes have unique ids (and disconnection detectors), as we show in Section~\ref{sec:analysis}.
The \synchronizer -Synchronizer operates with a multiplicative memory overhead at the nodes that is 
logarithmic on the runtime of the underlying synchronous algorithm being simulated: In particular, if the synchronous algorithm terminates in polynomial time (on the number of nodes), the memory overhead is poloylogarithmic.

 Lastly, we present two applications of the synchronizer: One is a classic application for porting  a synchronous dynamic network algorithm for maintaining spanning forests under arbitrary edge dynamics~\cite{barjon_casteigts_chaumette_johnen_neggaz_2014} to a semi-synchronous environment.
 The other, for the minority opinion dynamics presented in~\cite{clementi2024}, allows for an exponential improvement in semi-synchronous runtime when compared to semi-synchronous executions without the synchronizer.

  In summary, we make the following contributions: {\bf (1)} we extend the definition of synchronizers to networks with  arbitrary edge dynamics; {\bf (2)} we present the first synchronizer from the semi-synchronous to the synchronous model in a network with arbitrary edge dynamics; and {\bf (3)} we present non-trivial applications of the proposed synchronizer to existing algorithms.

  The remainder of this paper is organized as follows. Section~\ref{sec:related} presents related work. Section~\ref{sec:model} presents the system model, and Section~\ref{sec:alg} presents the \synchronizer -synchronizer. Section~\ref{sec:analysis} presents the correctness proofs and the analysis of the algorithm. Section~\ref{sec:appl} discusses the applications of the work, and Section~\ref{sec:conclude} concludes the paper.

\section{Related Work}\label{sec:related}

 \begin{table*}[tb]
    \caption{Comparison to other synchronizers}
    \begin{adjustbox}{width=\columnwidth,center}
    \begin{tabular}{c c c c c c}
    \hline
    \mbox{\textbf{Protocol}} & \mbox{\textbf{Year} } & \mbox{ \textbf{Mapping} } & \mbox{ \textbf{Network Dynamics} }  & \mbox{ \textbf{Anonymous} } \\
        \hline
    $\alpha, \beta,\gamma$-Synchronizers~\cite{awerbuch1985-complexity} & 1985 & Sync to Async & static  & No \\
    Afek et al.~\cite{afek1987-applying} & 1987 & Sync to Async & dynamic with eventual-quiescence & No \\
    Awerbuch and Sipser~\cite{awerbuch1988-dynamic}& 1988 & Sync to Async & dynamic  with $t_\pi$-stabilization &  No \\
    Awerbuch and Peleg~\cite{awerbuch1990-network} & 1990 & Sync to Async & static &  No \\
    Awerbuch et al.~\cite{awerbuch1992-adaptingasyncnetworks} & 1992 & Sync to Async & dynamic with eventual-quiescence & No \\
    $\zeta$-Synchronizer~\cite{shabtay1994-synchronizerlowmemoverhead} & 1994 &  Sync to Async & static   & No \\
    $\eta_1,\eta_2,\theta$-Synchronizers~\cite{shabtay1994-lowcomplex} & 1994 & Sync to Async & static   & No\\ 
    $\sigma$-Synchronizer~\cite{pandurangan2020-message} & 2020 &  Sync to Async  &  complete, static    & No \\ 
    Ghaffari and Trygub~\cite{ghaffari2023-nearoptimal} & 2023 & Sync to Async & static  & No\\ 
    \synchronizer -Synchronizer (this paper) & 2025 & Sync to Semi-sync  &  {\bf Arbitrary Dynamics} & {\bf Yes} \\   \hline
    \end{tabular}    \label{tab:related}
    \end{adjustbox}
    \end{table*}

Table~\ref{tab:related} summarizes some main works on synchronizers and their assumptions about network dynamics.       
    The first work on synchronizers was by Awerbuch~\cite{awerbuch1985-complexity}, who defined the problem and presented three synchronizers in a static network with trade-offs between time and message complexity. In his approach, to simulate a synchronous round, a node waits for confirmation that its neighbors finished simulating a previous round before advancing to the next round. This is the approach we take in this paper, but it is made more difficult by the arbitrary network dynamics. 

    The assumption of a static network was relaxed by Afek et al.~\cite{afek1987-applying} who assume a dynamic network with eventual quiescence. This was further relaxed by Awerbuch and Sipser~\cite{awerbuch1988-dynamic} who assume a dynamic network, but only require local quiescence. The idea is that nodes whose local neighborhoods stabilize should not have to wait for the whole network to stabilize before producing an output. This is especially relevant for protocols with short execution times.  Further improvements in execution time for both static and dynamic networks was achieved by follow-up work. 
    In contrast to previous deterministic synchronizers, Awerbuch et al.~\citep{awerbuch1992-adaptingasyncnetworks} propose a randomized synchronizer with polylogarithmic time- and message-complexity overheads for asynchronous dynamic networks. This was a significant improvement over previous synchronizers that incur linear blow-up in either running time, message count, or space requirements. The synchronizer we propose in this paper is deterministic. 
    
    Other synchronizers designed for static networks also assume the presence of node identifiers, such as the $\zeta$-Synchronizer—a modification of $\gamma$—proposed by Shabtay et al.~\cite{shabtay1994-synchronizerlowmemoverhead} that eliminated the need for temporary storage buffers. The authors also introduce the $z$-partition algorithm to reduce the number of external edges connected to each node, thus reducing the memory overhead of $\zeta$. This algorithm optimally balances memory overhead and time complexity. 
    
    More recently, Pandurangan et al.~\cite{pandurangan2020-message}, studying lower bounds for monte-carlo algorithms in a synchronous setting, proposed the $\sigma$-Synchronizer, which achieves only logarithmic simulation overheads in both time and message complexity relative to the number of synchronous rounds. The synchronizer is deterministic, but requires unbounded local memory and assumes a complete (static) network. Ghaffari and
    Trygub~\cite{ghaffari2023-nearoptimal} introduced the first deterministic distributed synchronizer with near-optimal time and message complexity overheads, but assume a static networks with unique node identifiers. 
    
    Our work is significantly different from all these previous works, as we allow the network topology to change arbitrarily without assuming eventual stabilization or unique node identifiers.  In this paper, we consider a new system model with stronger adversarial assumptions compared to what was considered in previous works. Our emphasis is on showing the possibility of having a synchronizer in this challenging setting, and we do not attempt to obtain optimal complexity for the synchronizer. That is the subject of future work.
        
\section{Model} \label{sec:model}
    We consider an edge-dynamic network that we formalize using a time-varying graph \cite{casteigts2018-journeydynamicnetworks}. 
    A time-varying graph is represented as $\mathcal{G} = (V, E, T, \rho)$ where $V$ is the set of nodes, $E$ is a (static) set of undirected pairwise edges between nodes, with a  lifetime $T\subseteq \mathbb{N}$.
    A \textit{presence} function $\rho : E \times T
    \to \{0,1\}$ indicates whether or not a given edge exists at a given time. We define a \textit{snapshot} of $\mathcal{G}$ at time $t$
    as the undirected graph $G_t (V,E_t)$, where $E_t=\{e \in E : \rho(e, t) = 1\}$. At any time, we denote the \textit{neighborhood} of a node $u \in V$  by $N_t(u)$. 
    For $t \geq 0$, the $t$-th \textit{stage} lasts from time $t$ to the instant just before time $t+1$; thus, the communication graph in stage $t$ is $G_t$. While implementing our synchronizer, we refer to these stages in the synchronous environment as \textit{phases}.
    
   Each node has a set of ports it uses to connect to its neighbors, numbered from $\{1, \ldots, \Delta\}$, where $\Delta$ denotes the maximum degree, over time, of the dynamic graph. An edge $(u, v)$ that is present at some stage $t$ will be connected to 
   specific ports assigned at both $u$ and $v$ that do not change while the edge remains present in the dynamic graph.
   To preserve anonymity among nodes and their neighbors, each node $u$ associates with its neighbors through port labels $\ell \in \{1, \ldots, \Delta\}$. Note that different nodes may connect to $u$ through the same port at different times, and a node may connect to different ports of $u$ over time, with the restriction that at most one node is connected to any given port of $u$ at any point in time. We assume that each node is equipped with a mechanism to detect disconnections on its ports~\cite{daymude2022-mutex, iyer2003dual, russell2001ethtool}. When an edge connected to a node $u$ is disconnected, the {\em disconnection detector} adds the corresponding port label to a temporary set $D$. The set $D$ is reset to $\emptyset$ at the end of each activation of
    $u$. 
 
 The memory of a node can be split into three groups---\emph{main state}, \emph{port state}, and \emph{neighbor state}. Each node has a \emph{main state} that it stores in its main memory, which consists of the variables a node maintains that are not directly associated with any port.
For each port there is a \emph{port state}, which stores variables directly associated with the port (for example, the local port label), and a \emph{neighbor state} for the variables that it pulled from the neighbor connected to the port. For convenience in the narrative, we may refer to a node $v$ connected to $u$ through port $\ell$ and use $u.\ell(v)$ to denote the {\em port label at node $u$ corresponding to the edge $(u,v)$}.
    We also use $u.x$ to denote the value of {\em variable $x$ at $u$}.
    
    An algorithm, including the synchronizer, is composed of multiple actions of the form $\langle label\rangle: \langle guard\rangle \to \langle operations\rangle$. An {\em action is  enabled} if its guard (a boolean predicate) evaluates to true, and a {\em node} is said to be {\em enabled} if it has at least one enabled action. The scheduler controls the stages when an enabled node is activated and picks exactly one enabled action at the node to execute at the given stage.
    Our synchronizer algorithm will ensure that we only have one enabled action per node at any stage $t$. 
    
    We assume the {\sc Pull} model of communication (see, e.g.,~\cite{besta17-pushpull}), where 
    every time a node $u$ is activated at a stage $t$, $u$ can pull the state information 
    of a neighboring node $v$. In Section \ref{sec:analysis}, 
    we show that the classic {\sc Pull} (or {\sc Push}) model
    is not powerful enough to support synchronization. 
    Thus, we further equip the {\sc Pull} model with a {\em 1-bit multi-writer atomic register at each edge-port},
    and allow an activated node $u$ at stage $t$ 
    to perform one atomic write onto each multi-writer register of any neighbor node $v$ such that $(u,v)\in E_t$.We show in Section~\ref{sec:analysis} that such an extension of the {\sc Pull} model is necessary for our aimed synchronization in the presence of arbitrary edge dynamics. Each action follows the Read-Compute paradigm, where these two components are executed in this order in lockstep (and writes occur within the compute cycle):
    \begin{itemize}
        \item \textbf{Read:} Pull the main and port state variables from neighbors. 
        \item \textbf{Compute:} Based on the pulled information and the node's current state, compute the next state of the node. 
        That includes updating (writing to) any local variables and allows for writing on the 1-bit multi-writer atomic register at the respective port of each of its current neighbors.
    \end{itemize}
    We focus on \textit{semi-synchronous} concurrency, where in each stage,  any (possibly empty) subset of enabled nodes is activated concurrently, under arbitrary topological changes.
To model this, we assume two adversaries, an {\em adaptive adversary} that controls the edge presence function at each stage (i.e, controls the edge dynamics), and the scheduler adversary---or simply the {\em scheduler}---that controls when nodes are activated. 
We assume a {\em weakly fair} scheduler that activates nodes such that any continuously enabled node is eventually activated (or equivalently, that every enabled node will be activated infinitely often). The only constraint on the edge dynamics adversary is on the maximum degree of the graph, i.e., no node can have more than $\Delta$ neighbors. Hence, $\forall t \in T, \forall u \in V, |N_t(u)| \leq \Delta$. We do not consider faults nor any Byzantine behavior.

      \begin{algorithm}[htb]
        \caption{$\delta$-Synchronizer} \label{alg}
        \begin{algorithmic}[1] 
        \State \textsc{Handshake}: $(\texttt{synch}=0) \lor (\exists \ell \in P\setminus \tilde D
        \suchthat \ell.\texttt{block} = 0)  \to$
        
            \Indent

                    \If{$\texttt{synch}=0$} \BComment{If this is the initialization stage of synchronization} 
                        \For{$\ell \in N(u)$} \BComment{for a phase $i$, \mbox{ } pull information from all neighbors}
                            \State $X(\ell)\gets$ 
                            {\sc Pull}$(\ell)$  
                            \BComment{and initialize $\tilde D$,\mbox{ } the set of disconnected \mbox{ } ports in}\label{algline:pullInfo}
                        \EndFor
                         \State $\tilde D\gets \phi$
                      \BComment{the current phase to the empty set. \hspace{14.5ex}}

                         \State $P\gets \{\ell\in N(u)\suchthat$  \BComment{Valid neighbors of $u$ to be considered in phase $i$'s simulation are: }
                         \Indent
                         \Indent
                         \State $[(X(\ell).\texttt{phase}<\texttt{phase})] \;\lor$ \BComment{(1) neighbors that will catch up with phase $i$, or} \label{algline:beginSetP}
                         \State $[(X(\ell).\texttt{phase}=\texttt{phase}) \;\land$ \BComment{(2) neighbors simulating  phase $i$  that have either}
                         \State $\;\;(X(\ell).\texttt{synch} = 0) \; \lor$  \BComment{(2a) not initialized their valid neighbors, or}
                         \State $\;\;(X(\ell).\texttt{port}\in 
                        X(\ell).P\setminus (X(\ell).\tilde D \cup 
                        X(\ell).D)))]\}$   \BComment{(2b) also consider $u$ as a valid}
                        \Statex{} \BComment{persistent (i.e., not disconnected) neighbor  
                        in phase $i$.}\label{algline:endSetP}
                        
                        \EndIndent
                        \EndIndent
                        
                        \State $\texttt{synch}=1$ 
                        \BComment{Flag end of the initialization stage for phase $i$ of the simulation.} \label{algline:setSynch}
                    \Else \BComment{If $u$ has finished initialization stage of the simulation of phase $i$,}
                        \For{$\ell \in P \setminus (\tilde D \cup D)$} \BComment{For every persistent valid neighbor in the phase:}
                        
                            \If{$X({\ell}).\texttt{phase}<\texttt{phase}$} \BComment {(a) Pull updated state from neighbors that are} \label{algline:checkNeighborLower}
                            \State $X(\ell) \gets$ {\sc Pull}$(\ell)$  \BComment{running behind  and not yet simulating phase $i$.} \label{algline:lowerPhasePull} 
                            \Else \BComment{Otherwise, they must be simulating phase $i$, so}
                                \State $X(\ell).\texttt{ack} \gets$ {\sc Pull}$(\ell).\texttt{ack}$  \BComment{(b)  pull only neighbor's \hspace{0.5mm} ack value for phase $i$.}  
                            \EndIf
                        \EndFor
                        \State $\tilde D \gets \tilde D \cup D$ \BComment{Add disconnections since last activation to the set $\tilde D$.}
                    \EndIf
             
                \Statex 
                \BComment{After pulling state and ack information, attempt edge agreement.}
                \For{$\ell \in (P \setminus \tilde D)$ 
                
                $\suchthat[(X({\ell}).\texttt{phase} = \texttt{phase}) \land 
                (\ell.\texttt{block} = 0) ]$} 
                \label{algline:phaseEqualLoop}
                        \If{$X(\ell).\texttt{ack} = 1$}   \BComment{If a neighbor has already initiated edge agreement, block} \label{algline:neighborAck}
                            \State{{\sc Block}$(\ell)$} \label{algline:setNeighborBlock} \BComment{the edge by simultaneously setting the \texttt{block} flag on neighbor's port} 
                            \State{$\ell.\texttt{block}=1$} \BComment{ as well as on own port  connected to edge to 1.} \label{algline:setOwnBlock}
                        \Else \BComment{Otherwise, indicate that state information for neighbor's} 
                            \State{$\ell.\texttt{ack} = 1$} 
                            \BComment{simulation of phase $i$ has been pulled.} 
                            \label{algline:setAck}
                        \EndIf  
                        \EndFor
            \EndIndent
             \Statex{} 
        \State \textsc{ExecuteSynch:} $(\texttt{synch} = 1) \land (\ell.\texttt{block} = 1$, $\forall \ell \in P\setminus \tilde D) \to$ \label{algline:executeSynch} 
            \Indent
                \State $F \gets \{\ell\in P \suchthat
                (\ell.\texttt{block} = 1) \}$ \BComment{$F$ represents the final  set of agreed neighbors of $u$ in phase $i$} 
                \State Run enabled action of (synchronous) algorithm $\synchAlgo$ with respect to  $\{X(\ell) \suchthat \ell \in F\}$
                \label{algline:actionA}
                \State $\texttt{phase} = \texttt{phase} + 1$ 
                \label{algline:phaseincr}
                \State $\texttt{synch} = 0$ \BComment{Clean up variables}
                \For{$\ell \in \{0,\ldots,\Delta\}$} \BComment{for new phase.}
                    \State $\ell.\texttt{ack} = \ell.\texttt{block} = 0$
                \EndFor
            \EndIndent 
        \end{algorithmic}
    \end{algorithm} 
    
\section{The \synchronizer-Synchronizer}
\label{sec:alg}
In this section, we describe the \synchronizer -Synchronizer. While the objective remains to adapt synchronous network algorithms to semi-synchronous dynamic environments, unlike existing synchronizers, we do not assume that the dynamic network eventually stabilizes.

The \synchronizer -Synchronizer (Algorithm~\ref{alg})
    consists of two actions, \textsc{Handshake} and \textsc{ExecuteSynch}, with exactly one of them being enabled for a node at any stage. 
    Recall that the synchronizer works by showing the equivalence between a semi-synchronous execution of \synchronizer ($\synchAlgo$) under an arbitrary dynamic network $\NonsynchGraph = \{G_0, G_1,\ldots 
    \}$ and a synchronous execution of $\synchAlgo$ under a (potentially different) arbitrary dynamic network $\synchGraph=\{H_0, H_1,\ldots 
    \}$.

    \begin{table*}[htb]
        \centering
        \caption{The notation, domain, initialization, and description of the local variables used in the algorithm by a node $u$. All variables are {\em multi-reader, single-writer atomic registers}, with the exception of the \texttt{block} flag, which is {\em multi-writer}.}
        \label{tab:variables}
        \begin{adjustbox}{width=\columnwidth,center}
        \begin{tabular}{cp{70pt}cp{208pt}}
            \toprule
            \textbf{Var.} & \textbf{Domain} & \textbf{Init.} & \textbf{Description} \\
            \midrule
            \texttt{synch} & \{0,1\}& 0 & A flag that indicates if the current neighborhood info has been pulled for current phase.
            \\

            \texttt{phase} & $\mathbb{N}$ & 0 & A node's own phase number. \\
    
            $\ell.\texttt{block}$ & \{0, 1\} & 0 & The \texttt{block} flag value specific to port $\ell$. This is the only multi-reader, {\em multi-writer} atomic register. \\
    
            $\ell.\texttt{ack}$ & \{0, 1\} & 0 & The acknowledgment flag value specific to port $\ell$. \\
    
            $\ell.\texttt{port}$ & $\{0,\ldots , \Delta-1\}$ & $\ell$ & This variable is in port $\ell$'s memory and signals the port label. \\

            $P$ & $\subseteq N(u)$ & $\emptyset$ & The set of valid ports
            for current phase. \\ 

            $D$ & $\subseteq \{0,\ldots , \Delta-1\}$ & $\emptyset$ & The set of ports that have become disconnected since last time node $u$ was activated. \\

            $\tilde D$ & $\subseteq \{0,\ldots , \Delta-1\}$ & $\emptyset$ & The set of ports that have become disconnected 
            throughout current phase. \\

            $F$ & $\subseteq P$ & $\emptyset$ & The set of valid ports that will be considered for the emulation of the synchronous algorithm by node $u$ during current phase. \\

            $X(\ell)$, $X(\ell).x$ & varied & varied & $X(\ell)$ denotes 
            all of $v$'s main and $v.\ell(u)$-port state variables pulled through port $\ell$ from a neighbor $u$. If indexed by $x$, it denotes the value of the pulled variable $x$; we omit 
            the reference to the port 
            $v.\ell(u)$ for the port variables. \\
            \bottomrule
        \end{tabular}
        \end{adjustbox}
    \end{table*}
    
In a nutshell, the \synchronizer -Synchronizer works by assigning a phase number to each node $u$, stored in the phase variable at $u$, denoted as $u.\texttt{phase}$, which is incremented every time $u$ performs an action of the synchronous algorithm $\synchAlgo$ (Lines~\ref{algline:actionA}-\ref{algline:phaseincr}) it simulates. 
    More concretely, at the end of phase $i$ for node $u$, $u$ will emulate the $i$-th synchronous step of $\synchAlgo$ during an execution of 
    the action {\sc ExecuteSynch} and increment its phase number to $i+1$ right afterwards. Before node $u$ can get to executing an action of $\synchAlgo$ during phase $i$, it must first agree on which (undirected) edges to consider on the $i$-th snapshot $\synchGraphSnap_i$ 
    of the dynamic graph $\synchGraph$ that sets the simulated synchronous execution of $\synchAlgo$. Since the graph $H_i$ will be defined locally by the nodes $u\in V$, one needs to ensure that a node $u$ will include another node $v$ in its neighborhood of $H_i$ if and only if $v$ also includes $u$ in its $H_i$-neighborhood---i.e., $H_i$ is a consistent undirected graph\footnote{A distinction could be made between undirected and bi-directional graphs, but this is not directly relevant in our context.}.
    This is done 
    through the {\sc Handshake} action, as we will explain in more detail below.

There are several variables that the \synchronizer-Synchronizer maintains at a node $u$, some that directly pertain to the state variables of algorithm $\synchAlgo$---which we call the {\em $\synchAlgo$-state} of $u$---and some that are used for the synchronization per se, such as \texttt{phase}, \texttt{synch}, $\tilde D$, etc. We describe the purpose of each variable below and in Table~\ref{tab:variables}. Note that the $\synchAlgo$-state variables of a node $u$ 
    do not change while $u.\texttt{phase}=i$,
    until $u$ gets to its execution of the $i$-th action of $\synchAlgo$ in Line~\ref{algline:actionA}, which will conclude its phase $i$.
    
During the handshake for the simulation of phase $i$, a node $u$ maintains a set
of potential neighbors (identified by the ports to which they are connected) to be included as neighbors in the simulated execution, which is initially equal to the set $u.P$ of all valid neighbors at the start of the handshake,
when $u.\tt{synch} = 0$. The set of potential neighbors
can "lose" elements in  other stages of the handshake for phase $i$ due to edge disconnections, 
but does not gain new elements. The crucial property to maintain is {\em edge consistency}: At the end of the simulation of phase $i$, the {\em final set} of neighbors of $u$, $u.F$, which is a subset of $u.P$, 
contains node $v$ if and only if the set $v.F$ contains node $u$. The sets $u.F$, for all nodes $u\in V$, determine the set of (undirected) edges in the simulated graph $H_i$ for phase $i$. 

Initially, the set $u.P$ contains the following nodes: 
\begin{description}
    \item[1.] (Running behind) Neighbors that have not yet started simulating phase $i$. These nodes can be included because node $u$ will wait for them to catch up during the next stages in $u$'s handshake for phase $i$ (unless they disconnect from node $u$),
    
    \item[2a.] (Concurrent nodes in initialization stages) Neighbors that have started simulating phase $i$ but have not yet finished the initialization stage of the handshake.
    These nodes will also be able to see that node $u$ has not yet finished its initialization stage of the simulation.
    
    \item [2b.] (Concurrent nodes past initialization  stages) Neighbors that completed the initialization of the simulation of phase $i$ and are still considering node $u$ as a neighbor in their handshake of phase $i$. These nodes must have started the handshake of phase $i$ before node $u$ but have not detected a disconnection in the edges linking them to node $u$.
\end{description} 
To calculate the set $u.P$, a node pulls information from its neighbors and determines which nodes fall into one of the categories above (Lines \ref{algline:beginSetP}--\ref{algline:endSetP}). 
At the end of the first stage of the handshake, node $u$ sets its synch flag to 1 to signal the end of its initialization for phase $i$.

To ensure {\em edge consistency}, each node $u$ keeps track of all ports that have experienced a disconnection since the start of the simulation of phase $i$ in its set $u.\tilde D$ (initially empty). All neighbors connected through ports in $u.\tilde D$ are removed from consideration
to be included in the simulated graph $H_i$.
Indeed, if there is a disconnection of an edge $(u,v)$ in a given stage, then nodes $u$ and $v$ will each add the respective port connected to edge $(u,v)$
to its own set $\tilde D$ in the first stage in which the node is active after the disconnection, and the edge $(u,v)$ and will not be included in the simulated graph $H_i$ by either node. Since nodes are anonymous, 
any edge that later connects  to the port of either $u$ or $v$ earlier connected to $(u,v)$---including, potentially, a reconnection of $u$ and $v$---will be ignored in phase $i$.

A node $u$ checks if the neighbors running behind in the simulation have caught up to phase $i$ (Line~\ref{algline:checkNeighborLower}) or, if the neighbor $v$ has caught up, $u$ only needs to update the respective ack value (since it already pulled the state information from $v$ for phase $i$ the last time it pulled from $v$ in Line~\ref{algline:lowerPhasePull}). 
Note that 
when the check is done in Line~\ref{algline:checkNeighborLower}, if the neighbor is not behind, it must be simulating the same phase: The reason is that if the node was previously behind it could not have advanced beyond phase $i$ without first executing the ack/block exchange and the first (and only) time that is done is when the two nodes are in phase $i$.

A node $u$ completes the 
handshake of phase $i$ when it determines that all edges linking it to a tentative neighbor in $u.P\setminus u.\tilde D$ are blocked (Line~\ref{algline:executeSynch}), implying that such an edge $(u,v)$ will be considered to be in $H_i$ by both $u$ and $v$. The blocking of such an edge $(u,v)$ can be initiated by 
$u$ itself or by node $v$, but in either case it will result in the block flags on the respective ports at $u$ and $v$ being set to 1 during the current stage (while the edge is up).  Whichever node blocks the edge-ports, say $u$, must have detected that the other node ($v$) set its ack to 1 (Lines~\ref{algline:neighborAck}-\ref{algline:setOwnBlock}), acknowledging that $u$ 
and $v$ are both simulating phase $i$, and that $v$ has pulled  $u$'s phase $i$ state information prior to the blocking ($u$ has also pulled $v$'s phase $i$ state during the current stage). Note that there are blocked  edges 
in phase $i$ that disconnect later in the phase: Those edges will be in the respective sets $F$ and thus also in $H_i$.
 
\section{Analysis}
\label{sec:analysis}
 We now prove the formal guarantees of the \synchronizer -Synchronizer.   
    For clarity of exposition, we annotate 
    phase-related variables (e.g., $P, F$) with the {\em superscript} $i$ to indicate their state {\em once they are initialized in phase $i$},
    and other state-related variables (e.g., $\texttt{phase},\texttt{ack},\texttt{block},D,\tilde D$) with the {\em subscript} $t$ to indicate their state {\em at the start of stage} $t$.
    We define the notion of a {\em valid persistent neighbor} of node $u$ for a phase $i$ at stage $t$, which informally can be described as a valid neighbor $v$ of $u$ at the start of phase $i$ which stays connected to $u$ through to stage $t$.

    \begin{definition}[Valid Persistent Neighbors]
      For two nodes $u$ and $v$ and $i\geq 0$,
      \begin{enumerate}
          \item[\bf a.] If $v.phase_{t{(u,i)}} \leq u.phase_{t{(u,i)}}=i$ and $v$ was continuously connected to one of the ports $\ell$
      of $u$ for all $r\in [t_{(u,i)},t]$ (i.e., $v \in (u.P^i\setminus u.\tilde D_t)$). Then $v$ is said to be a  {\em valid persistent neighbor} of $u$ at stage $t$ of phase $i$. 
    
      \item[\bf b.] Moreover,
      if there exists a $t\in [\max \{t_{(u,i)},t_{(v,i)}\},\min \{t_{(u,i+1)},t_{(v,i+1)}\}]$ such that $u$ is a persistent neighbor of $v$ and $v$ is a persistent neighbor of $u$ at stage $t$ in phase $i$, then edge $(u,v)\in E_r$  
        for all $r\in [\max\{t_{(u,i)},t_{(v,i)}\},t]$, 
        and $u$ and $v$ are {\em mutual valid persistent neighbors} 
        for phase $i$ (at stage $t$). 
      \end{enumerate} 
        \label{definition}
        \end{definition}

        Note that \synchronizer ($\synchAlgo$) always works on the set of valid persistent neighbors at stage $t$, be it when considering the neighborhood of a node in Line~\ref{algline:phaseEqualLoop} or in the guards of the two actions. 
    We use the notation $t_{(u,i)}$ to denote the first stage $t$ at which a node $u$ is activated with $u.\texttt{phase}_t=i$.

We start by showing, in the next lemma, that for any edge $(u,v)$ and any $i\geq 0$, $u$ and $v$ will both either have their respective port \texttt{block} variables set to 1 for phase $i$, if edge $(u,v)$ persists for long enough in that phase, or they will both have their \texttt{block} variables set to 0. In other words, they either agree to consider undirected edge $(u,v)$ for phase $i$, or not. This guarantees that each snapshot graph $\synchGraphSnap_i=\{(u,v)\in E \suchthat u.\ell(v).\texttt{block}^i=v.\ell(u).\texttt{block}^i=1\}$ of $\synchGraph$ at stage $i$ is a valid undirected graph from the local point-of-view of each $u.F^i$.

     \begin{lemma} \label{lem:consistentNeighborhood}
        For any two nodes $u$ and $v$,  $u$ perceives  the undirected edge $(u,v)$ as present in phase $i$ (i.e., $(u,v) \in \synchGraphSnap_i$) 
        if and only if $v$ also perceives $(v,u)\in \synchGraphSnap_i$. In other words, $u\in v.F^i$ if and only if $v\in u.F^i$.
        Moreover, if $t$ is the stage when node $v$ executes {\sc ExecuteSynch} for phase $i$, then $v.P^i\setminus v.\tilde D_t \subseteq v.F^i\subseteq v.P^i$, i.e.,  $v.F^i$ is a superset of all \emph{valid persistent neighbors} of node $v$ at the end of phase $i$.
    \end{lemma}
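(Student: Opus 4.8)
The plan is to reduce both claims to a pair of invariants about the port-level \texttt{block} flags, since these are the only quantities that determine $u.F^i$ (recall $F \gets \{\ell \in P \suchthat \ell.\texttt{block}=1\}$ in \textsc{ExecuteSynch}) and hence membership in $\synchGraphSnap_i$. First I would record the only two ways a \texttt{block} flag changes: it is set to $1$ exclusively inside the \textsc{Handshake} loop of Line~\ref{algline:phaseEqualLoop}, where a single blocking action by a node $u$ on its port $\ell=u.\ell(v)$ atomically sets \emph{both} $u.\ell(v).\texttt{block}$ (Line~\ref{algline:setOwnBlock}) and the mirror flag $v.\ell(u).\texttt{block}$ via the \textsc{Block}$(\ell)$ call on Line~\ref{algline:setNeighborBlock}; and it is reset to $0$ exclusively in \textsc{ExecuteSynch}, which a node performs exactly once per phase, at the transition from phase $i$ to $i+1$. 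From this I obtain the \emph{persistence invariant}: once $v.\ell(u).\texttt{block}$ is set to $1$ while $v.\texttt{phase}=i$, it stays $1$ until $v$ itself executes \textsc{ExecuteSynch} for phase $i$. In particular, a disconnection does not clear the flag, and while $v$ is still in phase $i$ its neighbor $u$ (even if $u$ has already advanced to phase $i+1$) never re-enters the blocking loop for this edge, since that loop requires $X(\ell).\texttt{phase}=\texttt{phase}$ and $u$ would instead only pull from the lower-phase $v$ on Line~\ref{algline:lowerPhasePull}.

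The crux is a \emph{symmetry-of-validity} argument linking the \texttt{ack} handshake to $P^i$ membership. I would argue that a node writes $\ell.\texttt{ack}=1$ (Line~\ref{algline:setAck}) only from inside the loop of Line~\ref{algline:phaseEqualLoop}, whose range is $\ell\in P\setminus\tilde D$; hence if $v$ ever set $v.\ell(u).\texttt{ack}=1$ during phase $i$, then necessarily $\ell(u)\in v.P^i$. Now a blocking action on edge $(u,v)$, performed by whichever of $u,v$ is activated first while both are mutual valid persistent neighbors, is guarded by $X(\ell).\texttt{ack}=1$ (Line~\ref{algline:neighborAck}), i.e.\ the actor must have observed the \emph{other} endpoint's \texttt{ack} flag equal to $1$. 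Applying the previous observation to the actor and to its counterpart yields $\ell(v)\in u.P^i$ \emph{and} $\ell(u)\in v.P^i$ simultaneously. Combined with the atomic double-set of the two \texttt{block} flags, this shows that the single event setting $u.\ell(v).\texttt{block}=1$ also sets $v.\ell(u).\texttt{block}=1$ and certifies membership of each endpoint in the other's valid-port set.

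Assembling the equivalence (for nodes that both complete phase $i$, so that $F^i$ is defined on each side): suppose $v\in u.F^i$, so at $u$'s \textsc{ExecuteSynch} for phase $i$ we have $\ell(v)\in u.P^i$ and $u.\ell(v).\texttt{block}=1$. Since $1$-values of \texttt{block} arise only from a blocking action, that action also set $v.\ell(u).\texttt{block}=1$ and, by the previous paragraph, $\ell(u)\in v.P^i$; by the persistence invariant $v.\ell(u).\texttt{block}$ is still $1$ when $v$ performs \textsc{ExecuteSynch} for phase $i$, so $u\in v.F^i$. The converse is identical after swapping $u$ and $v$, giving $u\in v.F^i \iff v\in u.F^i$, equivalently the well-definedness of the undirected $\synchGraphSnap_i$. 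For the superset claim, $v.F^i\subseteq v.P^i$ is immediate from the defining comprehension $F\gets\{\ell\in P\suchthat \ell.\texttt{block}=1\}$. For $v.P^i\setminus v.\tilde D_t\subseteq v.F^i$, I would invoke the guard of \textsc{ExecuteSynch}, namely $\ell.\texttt{block}=1$ for all $\ell\in P\setminus\tilde D$: at the activation stage $t$ this forces every $\ell\in v.P^i\setminus v.\tilde D_t$ to have $\ell.\texttt{block}=1$, hence to lie in $F^i$.

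I expect the main obstacle to be the symmetry-of-validity step, i.e.\ ruling out a \emph{unilateral} block in which one endpoint marks the edge as present while the other has legitimately excluded it from its $P^i$ (because of a disconnection recorded in $\tilde D\cup D$, or a phase mismatch captured in Definition~\ref{def:invalid}). The entire burden falls on showing that the \texttt{ack}-guard of the blocking action cannot be satisfied unless the counterpart has itself executed a \textsc{Handshake} that placed the edge in its valid-port set during phase $i$. This in turn requires care that \texttt{ack} is never a stale value carried over from a previous phase (it is cleared in \textsc{ExecuteSynch}) and that the atomicity of the $1$-bit multi-writer \texttt{block} register is what makes the double-set genuinely simultaneous. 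Establishing that single invariant cleanly is what makes both the forward and backward directions fall out symmetrically.
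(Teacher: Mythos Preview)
Your proposal is correct and follows essentially the same line as the paper's proof: both hinge on the atomic double-set of the \texttt{block} flags in Lines~\ref{algline:setNeighborBlock}--\ref{algline:setOwnBlock}, their monotone persistence until \textsc{ExecuteSynch}, and the guard of \textsc{ExecuteSynch} for the containment $v.P^i\setminus v.\tilde D_t\subseteq v.F^i$. The only cosmetic difference is that the paper establishes mutual $P^i$-membership by analyzing the invalid set $I$ (Line~\ref{algline:setI}) and Definition~\ref{definition} of mutual valid persistent neighbors, whereas you reach the same conclusion by tracing the \texttt{ack} guard at Line~\ref{algline:neighborAck} back to the loop range $P\setminus\tilde D$ at Line~\ref{algline:phaseEqualLoop}; these are two equivalent readings of the same handshake mechanism.
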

    \begin{proof}
         In order to prove the lemma, we first show that the local neighborhood view of each node throughout phase $i$ leads to a consistent view of the {\em undirected} graph $\synchGraphSnap_i$, i.e., we need to show that $v\in u.F^i$ if and only if $u\in v.F^i$. Let $u$ and $v$ be two {\em mutual  valid persistent neighbors} for phase $i\geq 0$ and let $t$ be the latest stage that satisfies Definition~\ref{definition}.b. Recall that  
        $u.\ell(v)$ and $v.\ell(u)$ denote the ports of $u$ and $v$, respectively, connected to the edge $(u,v)$ during $[\max \{t_{(u,i)},t_{(v,i)}\},
        t]$. Note that at stage $t$, 
        the phases of both $u$ and $v$ must be equal to $i$ and $u$ and $v$ are the neighbors that have been connected through $v.\ell(u)$ and $u.\ell(v)$ from the beginning of phase $i$ at $v$ and $u$, respectively, i.e., $u\not\in v.{\tilde D}_t$ and $v\not\in u.{\tilde D}_t$.  
        A node $v$ will be in the set $u.F^i$  if at a stage $t$, 
        $u.\ell(v).\texttt{block}$ is set to 1.
        Without loss of generality, assume that node $u$ is activated by the scheduler at stage $t$ and that it actively sets $u.\ell(v).\texttt{block}=1$.
        Following the algorithm, it can only do so right after atomically setting the multi-writer register
        $v.\ell(u).\texttt{block}=1$ at node $v$ (via the atomic {\sc Block} operation in Line~\ref{algline:setNeighborBlock}). (If $v$ is also activated at stage $t$ and attempts concurrently to set the corresponding \texttt{block} flags, one on its own memory and the other at the $u.\ell(v)$ port memory of $u$, we assume that the concurrent writes of $u$ and $v$ will succeed in writing the common value of 1 in the respective \texttt{block} variables.)

        The \texttt{block} flags for all ports of a node $u$ are set to 0 at the start of a new phase and
        if a \texttt{block} flag is set to 1 during a  phase $i$, it never goes back to 0 during the same phase. Thus we show that 
        the only way a node $v$ can be placed in the set $u.F^i$  (and, similarly $u$  in $v.F^i$) is if $u$ and $v$ are mutual valid persistent neighbors 
        at some stage during phase $i$.
        If there exists an edge $(w,z)\in E_r$ with $i=w.\texttt{phase}_r<z.\texttt{phase}_r$ then the respective \texttt{block} flags for the ports connecting $w$ and $z$ in $E_r$ cannot be set to 1 in stage $r$, since
         (i) 
         $z$ will not be part of $w.P^i$ and hence will not be considered by $w$ in phase $i$, 
         and (ii) $z$ also
        cannot consider $w$ in the loop at Line~\ref{algline:phaseEqualLoop} of the \synchronizer-Synchronizer algorithm during  stage $r$ since the phase numbers of $w$ and $z$ are not equal. 

        Moreover, the \texttt{block} flags for an edge $(w,z)$ cannot be set to 1 by node $z$ in a stage $r$ with $w.\texttt{phase}_r=z.\texttt{phase}_r=i$ if (i) $w\not\in z.P^i$
        or (ii) $w\in (z.\tilde D_r \cup z.D)$ (i.e., there was a disconnection in the port $z.\ell(w)_r$ on or before stage $r$, during phase $i$). Putting it altogether, an edge $(w,z)$ may only be considered to be added to $\synchGraphSnap_i$ during stage  $r$ 
        if $w$ and $z$ are mutual valid persistent neighbors in phase $i$ at stage $r$ (see Definition~\ref{definition}).

        It remains to argue that if $t$ is the stage when node $v$ executes {\sc ExecuteSynch} for phase $i$, then $v.P^i\setminus v.\tilde D_t \subseteq v.F^i\subseteq v.P^i$. Given the definition of the sets $F$, $v.F^i$ is trivially contained in $v.P^i$. To show that $v.P^i\setminus v.\tilde D_t \subseteq v.F^i$, it suffices to note that {\sc ExecuteSynch} will only be enabled for $v$ at a stage $t$ (during phase $i$ for $v$) if $v.\ell(u).\texttt{block}=1$ for all $u\in v.P^i\setminus v.\tilde D_t$.
        \end{proof}

        Recall that the {\em $\synchAlgo$-state of $u$} at phase $i\geq 0$ is equal to the configuration of all state variables maintained by $\synchAlgo$ (namely, the configuration of all of $u$'s  state variables other than 
         $\{ u.\texttt{phase}, u.\texttt{synch}, u.P, u.F, u.\tilde D, u.D; u.\ell.\texttt{block}, u.\ell.\texttt{ack}, u.\ell.\texttt{port} \suchthat \forall \ell\in [\Delta] \}$ and any pulled variables stored at $u$) at the start of phase $i$. 
         We further let $u.X_{\synchAlgo}(\ell)$ denote the $\synchAlgo$-state pulled by $u$ from the neighbor connected through port $\ell$, and $u.X_{\synchAlgo}(\ell)^i$ denote the
         {\em last value $X_{\synchAlgo}(\ell)$ that node $u$ pulls} through port $\ell\in F^i$ during phase $i$.

 We now show that at the start of the $i$-th action execution of $\synchAlgo$, $i\geq 0$, by node $u$ within the \synchronizer-Synchronizer algorithm, $u$ has a snapshot of the $\synchAlgo$-state in phase $i$
 of all its valid neighbors $v$ in $\synchGraphSnap_i$ that corresponds to the $\synchAlgo$-state of the neighbor $v$ in $\synchGraphSnap_{i}$ right after the
     $i$-th synchronous step in a synchronous execution of algorithm $\synchAlgo$ according to $\synchGraph$.
 
    \begin{lemma}

     For any node $u$ and phase $i\geq 0$, the value of $u.X_{\synchAlgo}(\ell)^i$, 
     for each port $\ell\in u.F^i$ corresponds to the $\synchAlgo$-state of the valid neighbor $v$ of $u$ connected through port $\ell$ at the start
     of the $i$-th execution of {\sc ExecuteSynch} by $v$. Moreover, 
     $u.X_{\synchAlgo}(\ell)^i$ corresponds to the $\synchAlgo$-state of the neighbor $v$ 
     of $u$ in $\synchGraphSnap_{i}$ right before
     the 
     $i$-th 
     step in a synchronous execution of algorithm $\synchAlgo$ according to $\synchGraph$.
     \label{lem:consistentAlgoValues}
     \label{lem:consistentPulledVars}
    \end{lemma}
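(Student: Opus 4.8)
The plan is to prove the lemma's two assertions in turn. The first is internal to the semi-synchronous execution: the last value $u$ pulls through a port $\ell\in u.F^i$ during phase $i$ is the $\synchAlgo$-state that the neighbor $v$ on that port holds throughout its own phase $i$. The second bridges to the synchronous world: this frozen phase-$i$ $\synchAlgo$-state coincides with the state of $v$ immediately before step $i$ of the synchronous execution of $\synchAlgo$ on $\synchGraph$. I would obtain the first assertion by a direct inspection of the pull sites of the \synchronizer-Synchronizer together with the \texttt{ack}-\texttt{block} handshake, and the second by induction on $i$ using the first assertion and Lemma~\ref{lem:consistentNeighborhood}.

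For the first assertion I would use the fact (noted in the text) that a node's $\synchAlgo$-state is frozen from the start of its phase $i$ until its $i$-th execution of {\sc ExecuteSynch}; hence any pull $u$ records carrying $X(\ell).\texttt{phase}=i$ captures precisely $v$'s phase-$i$ $\synchAlgo$-state. The full state is pulled only in Line~\ref{algline:pullInfo} (the bulk pull at $t_{(u,i)}$) and in the re-pull of Line~\ref{algline:lowerPhasePull}, whose guard requires the last recorded $X(\ell).\texttt{phase}<\texttt{phase}=i$; thus once $u$ has recorded $X(\ell).\texttt{phase}=i$ it performs no further full pull on $\ell$ during phase $i$ and only refreshes the \texttt{ack} bit, so the recorded $\synchAlgo$-state stays frozen at $v$'s phase-$i$ value. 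It remains to show that for $\ell\in u.F^i$ the recorded value actually reaches $X(\ell).\texttt{phase}=i$. This is forced by the handshake: $\ell\in u.F^i$ means $u.\ell(v).\texttt{block}=1$, and this block bit is set only inside the loop of Line~\ref{algline:phaseEqualLoop}---either by $u$ itself (Line~\ref{algline:setOwnBlock}), whose loop guard requires $u$'s recorded $X(\ell).\texttt{phase}=i$, or by $v$ through the atomic {\sc Block} of Line~\ref{algline:setNeighborBlock}, which requires $v$ to read $u.\ell(v).\texttt{ack}=1$, a value $u$ can set only under the same guard $X(\ell).\texttt{phase}=i$. In either case $u$ records $X(\ell).\texttt{phase}=i$, so by the freezing observation $u.X_{\synchAlgo}(\ell)^i$ equals $v$'s phase-$i$ $\synchAlgo$-state, i.e., the $\synchAlgo$-state of $v$ at the start of its $i$-th execution of {\sc ExecuteSynch}. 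I expect this timing argument---showing that $u$ neither retains a stale value nor misses $v$'s phase-$i$ window---to be the main obstacle.

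For the second assertion I would establish, by induction on $i$, the global invariant that the phase-$i$ $\synchAlgo$-state of every node $w$ in the semi-synchronous execution equals the state of $w$ just before step $i$ of the synchronous execution of $\synchAlgo$ on $\synchGraph$, whose snapshots $\synchGraphSnap_i$ are the block-consistent undirected graphs guaranteed by Lemma~\ref{lem:consistentNeighborhood}. The base case $i=0$ holds because both states are the common initialization of $\synchAlgo$. For the inductive step, assume the invariant through phase $i$ and consider $w$'s $i$-th execution of {\sc ExecuteSynch}, which runs the enabled action of $\synchAlgo$ on the inputs $\{X(\ell):\ell\in w.F^i\}$ (Line~\ref{algline:actionA}). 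By the first assertion each such input is the phase-$i$ $\synchAlgo$-state of the corresponding neighbor, which by the inductive hypothesis equals that neighbor's pre-step-$i$ synchronous state; and by Lemma~\ref{lem:consistentNeighborhood} the index set $w.F^i$ is exactly the neighborhood of $w$ in $\synchGraphSnap_i$. Hence $w$ evaluates the $i$-th action of $\synchAlgo$ on exactly the local state and neighbor states that step $i$ of the synchronous execution uses, so by determinism of $\synchAlgo$ the resulting phase-$(i+1)$ $\synchAlgo$-state of $w$ equals its post-step-$i$ (equivalently, pre-step-$(i+1)$) synchronous state, closing the induction. Applying the invariant to the neighbor $v$ at phase $i$ and combining with the first assertion gives the second: $u.X_{\synchAlgo}(\ell)^i$ is the $\synchAlgo$-state of $v$ in $\synchGraphSnap_i$ just before step $i$ of the synchronous execution of $\synchAlgo$ on $\synchGraph$.
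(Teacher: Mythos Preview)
Your proposal is correct and follows essentially the same approach as the paper: induction on the phase index $i$, reliance on Lemma~\ref{lem:consistentNeighborhood} for the neighborhood $u.F^i=N_{\synchGraphSnap_i}(u)$, the observation that a node's $\synchAlgo$-state is frozen throughout a phase, and an analysis of the \texttt{ack}-\texttt{block} handshake to conclude that whenever $\ell\in u.F^i$, node $u$ has pulled $v$'s state while $v.\texttt{phase}=i$. The only difference is organizational: you cleanly separate the first assertion (an internal fact about the semi-synchronous execution, established by direct inspection of the pull sites and the loop guard at Line~\ref{algline:phaseEqualLoop}) from the second (the bridge to the synchronous execution, established by induction using the first), whereas the paper interleaves both claims inside a single induction and argues the pull-timing via a case split on the relative activation order of $u$ and $v$ during phase $i$. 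Your decomposition is arguably cleaner and makes the role of each ingredient more transparent; the paper's version is slightly more operational but reaches the same conclusion by the same mechanisms.
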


    \begin{proof} 
         We will prove the lemma by induction on $i\geq 0$. The claim is trivially true for $i=0$, since
         by a node $u$ corresponds to the first action of $\synchAlgo$ that $u$ executes, which cannot rely on any proper initialization of $u$'s neighbors in a synchronized execution, i.e., cannot rely on any particular value of the pulled states of its neighbors in the synchronous execution.
         Hence, the pulled states $u.X_{\synchAlgo}(\ell)^0$ by $u$ for each port $\ell$ in the semi-synchronous execution must be trivially valid.
         
         Assume the claim is true for any phase less than $i\geq 1$. From Lemma \ref{lem:consistentNeighborhood}, we know that the final set $u.F^i$ represents the neighborhood of $u$ in $\synchGraphSnap_i$. As defined earlier, let $t_{(u,i)}$ be the first stage where $u$ starts an action execution already in phase $i$.
        Note that the $\synchAlgo$-state of any node $v$ only changes during an execution of {\sc ExecuteSynch}, so the value of any $\synchAlgo$ variables of $v$ remains unchanged for the duration of the phase until right before an action of $\synchAlgo$ is executed  in  Line~\ref{algline:actionA} of the \synchronizer  -Synchronizer, i.e., any pulled $\synchAlgo$-state of $v$ remains unchanged for the duration of  $[t_{(v,i)},t_{(v,i+1)}-1]$.
 
        By induction, at the start of the $i-1$-th
        execution of {\sc ExecuteSynch} by $u$ (during its phase $i-1$), 
        $u$ had pulled the correct  $\synchAlgo$-states $u.X_{\synchAlgo}(\ell(w))^{i-1}$ of each of its valid neighbors $w$ such that $(u,w)\in \synchGraphSnap_{i-1}$, 
        i.e., $u.X_{\synchAlgo}(\ell(w))^{i-1}$ is equal to the state of the variables of $\synchAlgo$ after the
        $(i-2)$-th
        synchronous step of algorithm $\synchAlgo$ according to $\synchGraphSnap_0,\ldots , \synchGraphSnap_{i-2}$, for each $(v,w)\in H_{i-1}$.
        Thus when a node $u$ executes its $(i-1)$-th action according to $\synchAlgo$ (within {\sc ExecuteSynch}), $\synchAlgo$ will behave exactly as it would in a (deterministic) execution of the $(i-1)$-th synchronous step of $\synchAlgo$ under $H_0,\ldots , H_{i-1}$, resulting in the computation of the correct $\synchAlgo$-states for each node $v$ after executing the $(i-1)$-th synchronous step given $H_0,\ldots , H_{i-1}$, i.e., the state that algorithm $\synchAlgo$ will pull from each node $v$ at the start of the $i$-th synchronous action execution.
        
        As we argued above, the $\synchAlgo$-state of a node $u$ does not change unless $u$ executes an action of $\synchAlgo$ within {\sc ExecuteSynch}. 
        Hence, all that remains to be shown is that at the start of the $i$-th execution of {\sc ExecuteSynch} by $u$ (during its phase $i$), $u$ will have pulled the ${\synchAlgo}$-state for each of its neighbors $w$ such that $(u,w)\in \synchGraphSnap_{i}$ (i.e., all $w\in u.F^i$) at least once while $w.phase=i$.

        Let $v\in u.F^i$, which in turn implies that $u\in v.F^i$.
        If $v.\texttt{phase}_{t_{(u,i)}}>u.\texttt{phase}_{t_{(u,i)}}=i$, $v$ is not a valid neighbor of $u$ in phase $i$, i.e., $v\not\in u.P^i\supseteq u.F^i$, a contradiction. 
        Thus we must have $v.\texttt{phase}_{t_{(u,i)}}\leq u.\texttt{phase}_{t_{(u,i)}}=i$. 
        Let $t$ be the stage in $[t_{(u,i)},t_{(u,i+1)}-1]$ when $v$ executed its $(i-1)$-th {\sc ExecuteSynch} and set $v.phase=i$ at the end of this execution. The first stage when node $v$ starts an action in phase $i$, $t_{(v,i)}$, must then lie within $[t,t_{(u,i+1)}]$. 
        Node $v$ executes a  {\sc Handshake} action 
        at $t_{v,i}$ and sets its \texttt{ack} variable to 1 to indicate that it has executed a {\sc Pull}$(\ell(u))$ (in Lines~\ref{algline:pullInfo} or \ref{algline:lowerPhasePull}) while $u.phase =i$, 
        implying that $v.\ell(u).\texttt{ack}_{t'}=1$, for all $t'\geq t_{(v,i)}$ while $v.\texttt{phase}_{t'}=i$.  From $u$'s point-of-view at stage $t_{(v,i)}$ there are two cases: (i) if $u$ is activated at some stage in $[t+1,t_{(v,i)}]$, then $u$ must have pulled $X_{\synchAlgo}(\ell(v))$ while $v.phase=i$ and hence it will have set its $u.\ell(v).ack=1$ for the remainder of its phase $i$; otherwise (ii) $u$ will be activated next (by weak fairness) at some stage $t''\geq t_{(v,i)}$ and pull $X_{\synchAlgo}(\ell(v))$ while $v.phase=i$ and set $u.\ell(v).ack=1$.
        In case (i), $v$ will be the first to acknowledge that $v.X(\ell(u)).\texttt{ack}=1$ and will set its own $v.\ell(u).\texttt{block}$ and $u.\ell(v).\texttt{block}$ to 1. In case (ii), $u$ will be the first to acknowledge that $u.X(\ell(v)).\texttt{ack}=1$ and will set its own $u.\ell(v).\texttt{block}$ and $v.\ell(u).\texttt{block}$ to 1. This \texttt{ack-block}-handshake ensures that we only set the \texttt{block} flags and thus consider a neighbor node for $F^i$ when both endpoints of the corresponding edge have pulled the correct $\synchAlgo$-state for phase $i$. 
    \end{proof}

Putting it all together, the following theorem states the correctness of the \synchronizer -Synchronizer:

    \begin{theorem}[Correctness]
    For any semi-synchronous execution of \synchronizer($\synchAlgo$) under an arbitrary edge-dynamic graph $\NonsynchGraph$ and a weakly-fair scheduler, there exists an arbitrary edge-dynamic graph $\synchGraph$ such that the $\synchAlgo$-state of each node at the end of each phase $i$ of the semi-synchronous execution of \synchronizer($\synchAlgo$) under $\NonsynchGraph$ and the weakly-fair scheduler, is equal to the state of each node at the end of step $i$ in a synchronous execution of $\synchAlgo$ under $\synchGraph$.
    \label{thm:safety}
    \end{theorem}
    
We will show that, conversely, any synchronous execution of $\synchAlgo$ on an arbitrary edge-dynamics time-graph $\synchGraph$ can be viewed as the execution of $\synchAlgo$ as mandated by the \synchronizer -Synchronizer in a semi-synchronous environment where, in a nutshell, the scheduler 
basically
activates all nodes at every stage and 
the graph $\NonsynchGraph$ mimics the dynamics of $\synchGraph$, by simulating every configuration of $\synchGraph$ over three consecutive stages of $\NonsynchGraph$.

\begin{theorem}[Weak non-triviality]
For any synchronous execution of $\synchAlgo$ under an arbitrary edge-dynamic graph $\synchGraph$, there exists an arbitrary edge-dynamic graph $\NonsynchGraph$ such that the state of each node at the end of each step $i$ of the synchronous execution of $\synchAlgo$ under $\synchGraph$, is equal to the $\synchAlgo$-state of each node at the end of phase $i$ 
    in a semi-synchronous execution of \synchronizer($\synchAlgo$) under $\NonsynchGraph$ and a weakly-fair scheduler, for all $i\geq 0$. 
\label{thm:outcomes}
\end{theorem}

\begin{proof}
  We show that every synchronous execution of $\synchAlgo$ under an arbitrary edge-dynamics time-varying graph $\synchGraph$ can be reproduced by the \synchronizer-Synchronizer in a weakly-fair semi-synchronous execution of \synchronizer($\synchAlgo$)
  for a given 
  time-varying graph $\NonsynchGraph$. For every $i\in [T_{sync}] $, we let $G_{3i}=G_{3i+1}=G_{3i+2}=\synchGraphSnap_{i}$.  For every $0\leq i\leq T_{sync}$, we assume that the weakly-fair scheduler activates all nodes in stage $3i$, activates all nodes with at least one adjacent edge in $E_{3i+1}=E'_i$ in stage $3i+1$, and activates all the nodes in stage $3i+2$ of the semi-synchronous execution of \synchronizer($\synchAlgo$). Starting from stage 0 with all variables properly initialized (in particular, all \texttt{block} flags, phase numbers and synch values set to 0 for all nodes), for every $0\leq i\leq T_{sync}$ and every node $u$, at stage $3i+1$ all the \texttt{block} flags for the ports of $u$ corresponding to an edge $(u,v)\in E_{3i+1}(=E'_{i})$ will be set to 1 (since all nodes will set their synch variables to 1 and all the ports connected to an edge will have their ack flags set to 1 in stage $3i$; the edge set remains unchanged between stages $3i$ and $3i+1$ of the semi-synchronous execution). Thus, every node $u$ will execute {\sc ExecuteSynch} in stage $3i+2$ with $F^i=E_{3i+2}=E'_i$ and the algorithm $\synchAlgo$ progresses in stage $3i+2$ of the semi-synchronous execution of \synchronizer($\synchAlgo$) exactly as it would in the $i$-th synchronous 
  step under $\synchGraph$; at the end of the {\sc ExecuteSynch} execution, each node $u$ will increment its phase number to $i+1$ and reset all the other relevant variables to 0.
\end{proof}

Together, Theorems~\ref{thm:safety} and \ref{thm:outcomes}, show that the \synchronizer -Synchronizer correctly emulates all the possible synchronous executions of algorithm $\synchAlgo$, and only those,  under arbitrary edge--dynamics.
 We now prove the following lemma, which will allow us to show liveness of the algorithm. We measure {\em time} by number of elapsed stages.
    
    \begin{lemma} 
    For any 
    $i\geq 0$, given a stage $r_i$
    such that the smallest phase number of a node at the start of $r_i$ is equal to $i$ (i.e., $\min_{w\in V}\{w.\texttt{phase}_{r_i}\}=i$), it takes a finite amount of time to reach a stage $r_{i+1}$ where the smallest phase number of a node at the start of
    stage $r_{i+1}$ is equal to $i+1$ (i.e., $\min_{w\in V}\{w.\texttt{phase}_{r_{i+1}}\}=i+1$). 
    \label{lem:minphase}
    \end{lemma}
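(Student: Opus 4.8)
The plan is to reduce the statement to a per-node progress claim and then rule out any infinite waiting by a contradiction argument driven by the \texttt{ack}-\texttt{block} handshake and weak fairness. First I would record three standing facts. (i) As observed in Section~\ref{sec:algorithm}, the guards of \textsc{Handshake} and \textsc{ExecuteSynch} are complementary, so every node has exactly one enabled action at every stage; hence by weak fairness every node is activated infinitely often. (ii) A node's \texttt{phase} never decreases and is incremented only in Line~\ref{algline:phaseincr}, so once $\min_{w\in V}\{w.\texttt{phase}_{r_i}\}=i$ no node is ever again below phase $i$, and $\min_{w\in V}\{w.\texttt{phase}\}$ reaches $i+1$ exactly when the last node still at phase $i$ executes \textsc{ExecuteSynch}. (iii) Since the node set $V$ is finite, it therefore suffices to show that each node currently at phase $i$ executes its $i$-th \textsc{ExecuteSynch} within a finite number of stages; $r_{i+1}$ is then the maximum of these finite times (plus one), at which point the slowest former phase-$i$ node has just reached phase $i+1$ and all others are at phase $\geq i+1$.

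Next I would isolate the quantity a phase-$i$ node is waiting on. For a node $u$ in phase $i$, let $U_u$ denote the set of ports $\ell\in u.P^i\setminus u.\tilde D$ with $\ell.\texttt{block}=0$; by the guard of \textsc{ExecuteSynch}, this action becomes enabled for $u$ precisely when $U_u=\emptyset$. The key monotonicity observation is that during phase $i$ the set $u.P^i$ is fixed (initialized once at $t_{(u,i)}$ in Lines~\ref{algline:setI}--\ref{algline:setP}), $u.\tilde D$ only grows, and a \texttt{block} flag only ever flips from $0$ to $1$ within a phase. Hence $U_u$ is monotone non-increasing and, being a subset of the finite port set, stabilizes to some $U_u^\infty$. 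If $U_u^\infty=\emptyset$ then $u$ eventually has \textsc{ExecuteSynch} continuously enabled and, by (i), advances; so it remains to exclude the possibility that some node $u$ keeps $U_u^\infty\neq\emptyset$, i.e.\ stays in phase $i$ forever.

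For the contradiction, I would take such a $u$ and a port $\ell\in U_u^\infty$ connecting to a neighbor $v$. Because $\ell$ never enters $u.\tilde D$, the edge $(u,v)$ is never disconnected on $u$'s side after $t_{(u,i)}$, i.e.\ it persists; and because $\ell\in u.P^i$ survived the invalid-neighbor test of Line~\ref{algline:setI}, $v$ is not ahead of $u$ and (if already synchronized) has recorded no disconnection on its side, so a case analysis gives $u\in v.P^i$ with $u\notin v.\tilde D$. Two things then follow. First, $v$ cannot have advanced past phase $i$: if it had executed its $i$-th \textsc{ExecuteSynch} it would have had $u\in v.P^i\setminus v.\tilde D\subseteq v.F^i$, and then Lemma~\ref{lem:consistentNeighborhood} would force $v\in u.F^i$, i.e.\ $\ell.\texttt{block}=1$, contradicting $\ell\in U_u^\infty$. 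Since no node is below phase $i$, $v$ is therefore also stuck at phase $i$ forever. Second, with $u$ and $v$ both permanently in phase $i$, the edge persistent, both having set $\texttt{synch}=1$ (Line~\ref{algline:setSynch}), and both activated infinitely often, the \texttt{ack}-\texttt{block} handshake of Line~\ref{algline:phaseEqualLoop} must complete: the first of the two to be activated pulls the other's \texttt{ack}, finds it $0$, and sets its own \texttt{ack}$=1$ (Line~\ref{algline:setAck}); at the other's next activation it reads this \texttt{ack}$=1$ and atomically sets both \texttt{block} flags to $1$ (Lines~\ref{algline:setNeighborBlock}--\ref{algline:setOwnBlock}). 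This sets $\ell.\texttt{block}=1$ in finite time, contradicting $\ell\in U_u^\infty$. Hence no node stays in phase $i$ forever, which by (iii) proves the lemma.

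I expect the main obstacle to be the symmetry step embedded in the third paragraph: showing that a persistent edge between two phase-$i$ nodes is treated the same way by both endpoints (either blocked on both sides, or dropped from both $P^i$ sets), so that one-sided waiting cannot occur. This needs a careful case analysis on the relative order of $t_{(u,i)}$ and $t_{(v,i)}$ and on whether $v.\texttt{synch}=1$ when $u$ runs its invalid-neighbor test, leaning precisely on the use of the pulled $X(\ell).P$, $X(\ell).\tilde D$, and $X(\ell).D$ fields in Line~\ref{algline:setI} together with Definition~\ref{def:invalid} and Lemma~\ref{lem:consistentNeighborhood}; the remaining steps are routine monotonicity and weak-fairness bookkeeping.
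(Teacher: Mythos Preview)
Your proposal is correct and follows essentially the same approach as the paper's proof: both use that exactly one action is always enabled (so weak fairness yields infinitely many activations), focus on a minimum-phase node and its valid persistent neighbors (which must also be at phase~$i$), and argue that for each such neighbor either the \texttt{ack}-\texttt{block} handshake completes or the edge disconnects, so \textsc{ExecuteSynch} eventually fires. Your version is more explicit---framing the argument as a contradiction via the monotone waiting set $U_u$ and invoking Lemma~\ref{lem:consistentNeighborhood} to rule out the neighbor advancing unilaterally---whereas the paper argues directly and more tersely, but the underlying mechanism is identical.
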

    
    \begin{proof}
    We start by noting that due to weak fairness, the adversary must activate 
        any continuously enabled node in a finite amount of time, and
        given the design of our algorithm, every node always has exactly one enabled action at any given stage. 
            Let $w\in V$ be such a node with $w.\texttt{phase}_{r_i}=i$.
            Each valid persistent neighbor $z$ of $w$ at $r_i$ (i.e., $z\in w.P_{r_i} \setminus w.\tilde D_{r_i}$)
            has a phase number equal to $i=w.\texttt{phase}_{r_i}$ (since it cannot have a lower phase number) and hence, it
            will be considered in the  loop in Lines~\ref{algline:phaseEqualLoop}-\ref{algline:setAck}.
            By Lemma~\ref{lem:consistentNeighborhood} and weak fairness, $w$ and $z$ 
            will be activated
            in finite amounts of time from $r_i$ and 
            will either advance to set both of their \texttt{block} flags corresponding to the ports connected to edge $(w,z)$ to 1 or this edge becomes disconnected before the \texttt{block} flag is set at either of $w$ or $z$ (implying that $z$ enters $w.\tilde D_{r}$ for some $r>r_i$). 
            Once this happens, in finite time, for all valid persistent neighbors of $w$ at stage $r_i$, {\sc ExecuteSynch} will become enabled for $w$ at some (finite) stage $r(w)>r_i$, resulting in $w.\texttt{phase}_{r(w)}$ being incremented to $i+1$. Let $r_{i+1}-1$ be the stage when the last nodes with phase number equal to $i$ at the start of stage $r_i$ increments their phase number. Then at the start of stage $r_{i+1}$ we have that $\min_{w\in V}w.\texttt{phase}_{r_{i+1}}=i+1$. Since there are a finite number of nodes and since each node has a finite number of neighbors at any stage of ${\NonsynchGraph}$, $r_{i+1}-r_i$ must be finite, proving the lemma.
            \end{proof}

    The following theorem, stating the liveness of the \synchronizer -Synchronizer algorithm, is a direct consequence of the fact that the minimum phase number increases by one in a finite amount of time, as proven in Lemma~\ref{lem:minphase}.
    \begin{theorem}[Finite termination] \label{thm:liveness}
        Our synchronizer ensures liveness, since any node will progress in finite time to phase $i$, for any finite $i\geq 0$. Moreover, all synchronous execution of $\synchAlgo$ terminate in finite time if and only if all the semi-synchronous executions of \synchronizer($\synchAlgo$) also do. 
    \end{theorem}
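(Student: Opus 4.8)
The plan is to split the argument into the two claims of the theorem and lean almost entirely on the machinery already established. The first claim---that every node reaches phase $i$ in finite time for any finite $i\geq 0$---is a straightforward induction on $i$ built on Lemma~\ref{lem:minphase}. I would take $r_0=0$, at which stage $\min_{w\in V}\{w.\texttt{phase}_{r_0}\}=0$ since all phase variables are initialized to $0$. Applying Lemma~\ref{lem:minphase} once yields a finite stage $r_1$ at which the minimum phase is $1$, and applying it $i$ times in succession yields a finite stage $r_i$ at which $\min_{w\in V}\{w.\texttt{phase}_{r_i}\}=i$. Each increment of the minimum phase costs only finitely many stages and we compose finitely many of them, so $r_i$ is finite; and minimum phase $=i$ means every node has phase at least $i$ at $r_i$, i.e., every node has progressed to phase $i$ within finite time. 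This gives liveness.

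For the finite-termination equivalence I would exploit the phase-to-step correspondence furnished by Theorems~\ref{thm:safety} and~\ref{thm:outcomes}: each phase of the semi-synchronous execution of \synchronizer($\synchAlgo$) faithfully emulates exactly one synchronous step of $\synchAlgo$, and every synchronous execution arises this way and only this way. The cleanest route is to prove both directions by contrapositive. If some semi-synchronous execution of \synchronizer($\synchAlgo$) under $\NonsynchGraph$ fails to terminate, then it passes through infinitely many phases; by Theorem~\ref{thm:outcomes} there is a synchronous execution under a matching $\synchGraph$ whose per-step states agree with these per-phase states, so that synchronous execution takes infinitely many steps and does not terminate. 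Conversely, if some synchronous execution of $\synchAlgo$ under $\synchGraph$ fails to terminate, then by Theorem~\ref{thm:safety} there is a semi-synchronous execution of \synchronizer($\synchAlgo$) whose per-phase states reproduce it, hence that execution passes through infinitely many phases and does not terminate. Together these give ``all synchronous executions terminate in finite time if and only if all semi-synchronous executions do.''

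To make ``terminates in finite time'' precise and to connect the two notions, I would observe that the $\synchAlgo$-state of every node at the end of phase $i$ equals its synchronous $\synchAlgo$-state after step $i$ (Theorems~\ref{thm:safety} and~\ref{thm:outcomes}); hence the guards of $\synchAlgo$ become uniformly false at exactly the same index in both models. When $\synchAlgo$ reaches a terminal configuration after some finite step $k$---so no node has an enabled $\synchAlgo$-action---the matching semi-synchronous states after phase $k$ are likewise terminal, so {\sc ExecuteSynch} has no $\synchAlgo$-action left to run and the synchronizer produces no further phases; by the first claim all nodes reach phase $k$ in finite time, so the synchronizer terminates in finitely many stages, and symmetrically a synchronous execution that never reaches a terminal configuration forces infinitely many phases. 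This alignment of the terminal configurations is the one place that needs care and is the main obstacle: the guard of {\sc ExecuteSynch} does not itself mention $\synchAlgo$'s enabledness, so I must argue---via the state correspondence above---that simulating a terminated $\synchAlgo$ leaves nothing for {\sc ExecuteSynch} to do, and that liveness (the first claim) ensures the terminal phase is actually reached in finite time rather than merely in the limit.
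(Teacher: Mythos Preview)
Your proof takes essentially the same approach as the paper: for liveness you set $r_0=0$ and iterate Lemma~\ref{lem:minphase} to get a finite $r_i$, exactly as the paper does, and for the termination equivalence you rely on the phase-to-step correspondence supplied by Theorems~\ref{thm:safety} and~\ref{thm:outcomes}, which the paper also invokes (though only tersely in the proof and more explicitly in the surrounding discussion and Corollary~\ref{cor:main}). Your treatment of the second part is more careful than the paper's one-sentence justification---in particular your contrapositive in both directions and your explicit discussion of what ``terminal'' means for \textsc{ExecuteSynch}---but this is added rigor along the same line of argument rather than a different route.
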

    \begin{proof}
    By Lemma~\ref{lem:minphase}, all nodes have progressed to at least phase $i$ by some stage  $r_i$, as defined in the lemma.
By setting $r_0=0$ and applying Lemma~\ref{lem:minphase} iteratively for $0\leq j<i$, we get an $r_i$ that must be finite, since $i$ is finite ($r_i$ would correspond to a finite sum of finite numbers).  Since the maximum number of phases in any execution of \synchronizer($\synchAlgo$) is equal to the worst-case number of steps that it takes for $\synchAlgo$ to terminate, then the second part of the theorem follows.
    \end{proof}

    Theorems~\ref{thm:safety}, \ref{thm:outcomes} and~\ref{thm:liveness} ensure that algorithm $\synchAlgo$ terminates
    in finite time for any synchronous execution 
    under an (arbitrary) time-varying graph $\synchGraph$ (i.e., $T_{sync}$ is finite) if and only if \synchronizer($\synchAlgo$) correctly emulates $\synchAlgo$ for any (arbitrary) time-varying graph $\NonsynchGraph$ also
    in finite time. The following corollary summarizes these results:

    \begin{corollary} 
    \label{cor:main}
        An algorithm $\synchAlgo$ 
        always terminates in finite  time for any given synchronous execution under arbitrary edge--dynamics if and only if \synchronizer ($\synchAlgo$) correctly emulates $\synchAlgo$ and 
        always terminates in finite time for any given weakly-fair semi-synchronous execution and arbitrary edge dynamics. 
    \end{corollary}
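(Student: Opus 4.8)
The plan is to observe that the corollary requires no genuinely new argument: it is a repackaging of the three preceding results, where one cleanly separates the \emph{emulation} claim from the \emph{termination-equivalence} claim. The emulation part---that \synchronizer($\synchAlgo$) correctly emulates $\synchAlgo$---holds unconditionally, independent of any termination hypothesis, and is exactly what Theorems~\ref{thm:safety} and~\ref{thm:outcomes} jointly assert: every synchronous execution of $\synchAlgo$ is state-equivalent, phase-by-phase, to some semi-synchronous execution of \synchronizer($\synchAlgo$), and conversely. Thus I would first dispatch the emulation clause by citing these two theorems, after which only the biconditional on finite termination remains.

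For the forward direction I would assume that $\synchAlgo$ terminates in finite time under every synchronous execution, and fix an arbitrary weakly-fair semi-synchronous execution of \synchronizer($\synchAlgo$) under some $\NonsynchGraph$. By Theorem~\ref{thm:outcomes} there is a $\synchGraph$ whose synchronous execution agrees with this semi-synchronous execution on every node's $\synchAlgo$-state at the end of each phase (equivalently, each step). By hypothesis that synchronous execution halts after finitely many steps, say $T$, so the common terminal $\synchAlgo$-state is reached at phase $T$; Lemma~\ref{lem:minphase}, applied $T$ times from $r_0 = 0$ exactly as in the proof of Theorem~\ref{thm:liveness}, then guarantees that the minimum phase number reaches $T$ after finitely many stages, so the semi-synchronous execution terminates in finite time as well. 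Since the semi-synchronous execution was arbitrary, \synchronizer($\synchAlgo$) always terminates.

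The backward direction is symmetric and uses Theorem~\ref{thm:safety} in place of Theorem~\ref{thm:outcomes}: assuming \synchronizer($\synchAlgo$) terminates for every weakly-fair semi-synchronous execution, I would fix an arbitrary synchronous execution under some $\synchGraph$, invoke Theorem~\ref{thm:safety} to obtain a matching $\NonsynchGraph$ (together with its weakly-fair scheduler), note that the corresponding semi-synchronous execution halts after finitely many phases by hypothesis, and transport that terminal state back across the phase-by-phase equivalence to conclude that $\synchAlgo$ halts after the same finite number of synchronous steps.

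The main thing to get right is the quantifier bookkeeping rather than any substantive estimate: each direction pairs a \emph{universally} quantified execution in one model with the \emph{existentially} produced graph supplied by the relevant theorem, and one must check that a configuration in which no action is enabled in one model maps, under the state-equivalence, to a terminal configuration in the other---which is precisely the ``maximum number of phases equals worst-case number of steps'' observation already made in the proof of Theorem~\ref{thm:liveness}. All of the genuine work---constructing the matching dynamic graphs and showing that each phase advances in finite time---has already been carried out in Theorems~\ref{thm:safety},~\ref{thm:outcomes} and~\ref{thm:liveness} and in Lemma~\ref{lem:minphase}, so the corollary itself reduces to this combination.
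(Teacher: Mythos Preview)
Your proposal is correct and matches the paper's approach: the paper does not give a separate proof of the corollary but simply states that it ``summarizes'' Theorems~\ref{thm:safety}, \ref{thm:outcomes} and~\ref{thm:liveness}, which is precisely the combination you spell out. Your quantifier bookkeeping (using Theorem~\ref{thm:outcomes} to pass from an arbitrary $\NonsynchGraph$ to a matching $\synchGraph$ in the forward direction, and Theorem~\ref{thm:safety} for the reverse) is consistent with the theorem statements in Section~\ref{sec:analysis}.
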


We now note that adding a 1-bit multi-writer atomic register at each port of a node as an extension to the {\sc Pull} model is a {\em necessary minimal extension} of the model that allows for the implementation of a synchronizer that can preserve the set of possible outcomes and finiteness of executions of an underlying synchronous algorithm, since such a task would be impossible under the classic {\sc Pull} or {\sc Push} and arbitrary edge dynamics, as we show below.
In the {\em edge--agreement} problem, suppose two nodes, $u$ and $v$, connected by an edge $(u,v)$ at some stage $t$ of a dynamic graph, want to reach an agreement regarding the final value of some 0-1 variable, which is currently set to 0 at both $u$ and $v$: The rule is that if one of the nodes changes its value to 1 at stage $t$, the other node also has to change its value to 1 at some finite stage greater than or equal to $t$; otherwise, $u$ and $v$ remain with the value of 0 for the variable.
In other words, if a node $u$ decides to consider the edge
$(u,v)$ at some stage, then node $v$ has to also commit to consider the edge
within finite time; otherwise both nodes decide not to consider the edge. 

This is a fundamental building block of any deterministic synchronizer for undirected dynamic graphs that preserves finite time executions, which is our case. Without such a building block, all that the nodes would be able to guarantee to agree on in finite time is to never consider any edge in the network, but that would violate the non-triviality requirement.
In the theorem below, we assume that we have an underlying arbitrary edge--dynamics network, where nodes can have unique ids (i.e., nodes are more powerful than the anonymous nodes considered in the previous sections\footnote{Nodes can be assumed to have access to a disconnection detector instead of node ids, matching the model in this paper, but we chose to present the impossibility proofs assuming node ids since this a more powerful variant of the model.}). 

    \begin{theorem} \label{thm:imposs} 
    In a semi-synchronous 
    dynamic network environment with arbitrary edge dynamics and unique node ids,  
    one cannot solve the edge-agreement problem deterministically (in finite time) under the classic {\sc Pull} or {\sc Push} models.
    \end{theorem}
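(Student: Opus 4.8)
The plan is an indistinguishability (adversary) argument showing that the information asymmetry of the classic {\sc Pull} model---where a node may read its neighbors' states but cannot write to them---prevents the first node that commits to the edge from ever conveying this fact to its partner once the adversary severs the edge. First I would pin down the non-triviality content of the edge-agreement specification implied by the surrounding discussion: a useful algorithm must guarantee that if the edge $(u,v)$ persists forever and both endpoints are activated infinitely often, then both $u$ and $v$ eventually set their variable to $1$ (commit); otherwise the trivial ``never commit'' solution would qualify, contradicting the building-block role of the problem. Suppose, for contradiction, that a deterministic $\mathcal{A}$ solves edge-agreement in the classic {\sc Pull} model, even when nodes carry unique ids and disconnection detectors. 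Run $\mathcal{A}$ on the two-node system with $(u,v)$ present from stage $0$ under a serialized weakly-fair schedule activating $u$ at even stages and $v$ at odd stages. By non-triviality both nodes eventually commit; let $\tau$ be the first stage at which any node commits. Since at most one node is activated per stage, exactly one commits at $\tau$; without loss of generality $u$ commits at $\tau$ (so $\tau$ is even and $v$ has not committed by the start of $\tau$).

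The key observation is that in {\sc Pull} the act of $u$ committing at stage $\tau$ writes nothing into $v$'s memory: $v$ can learn of it only by executing a {\sc Pull} across a present edge at some stage $>\tau$. I would therefore let the adversary disconnect $(u,v)$ permanently immediately after stage $\tau$, yielding execution $\alpha$. In $\alpha$ node $u$ has committed, so by safety and finite termination $v$ must also commit at some finite stage $t_v>\tau$; but after $\tau$ node $v$ pulls nothing from $u$, so its decision rests entirely on its state at the start of $\tau$, a state in which every value $v$ ever pulled from $u$ reflected a \emph{non-committed} $u$ (its last such pull was at $\tau-1$, and $u$'s first commit is at $\tau>\tau-1$).

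I would then exhibit a companion execution $\beta$ indistinguishable from $\alpha$ to $v$ in which $u$ never commits, forcing a safety violation. Keep $\beta$ identical to $\alpha$ through stage $\tau-1$, so $v$'s last pull of $u$ (at $\tau-1$) is unchanged; at stage $\tau$ let the adversary \emph{not} activate $u$ and disconnect the edge. Since $v$ is idle at the even stage $\tau$ in both executions, and the disconnection detector reports the same ``edge lost since last activation'' event at $v$'s next activation $\tau+1$ in both, $v$'s entire view---and hence its behavior---is identical in $\alpha$ and $\beta$; in particular $v$ commits at $t_v$ in $\beta$ as well. But in $\beta$ node $u$ was not activated at $\tau$ and thereafter (kept isolated forever) has no neighbor to pull from, so $u$ never commits, giving an execution in which $v$ commits while $u$ does not---contradicting safety.

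The main obstacle is closing the one gap above: ruling out that $u$ nonetheless commits in $\beta$ from its \emph{internal} state alone, without the pull it used at $\tau$ in $\alpha$. I would dispatch this by minimality. If the first committer could commit from internal state with no guarantee obtained from its partner, then the same edge-cutting indistinguishability applies symmetrically to the partner, pushing the unjustified commit strictly earlier in time; since $\tau$ is finite and no node has pulled anything justifying commitment before stage $0$, this descent terminates, yielding a contradiction at the earliest commit. Finally, the classic {\sc Push} model is handled by the dual construction: there $u$ can push its commit to $v$, but safe commitment still requires $u$ to receive an acknowledging push back from $v$, which the adversary blocks by the same post-$\tau$ disconnection, so Theorem~\ref{thm:imposs} holds for {\sc Push} as well.
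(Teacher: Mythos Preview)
Your approach is a formal two-execution indistinguishability argument, which is more careful than the paper's informal ``$u$ cannot decide'' reasoning and captures the same core idea: in the {\sc Pull} model the first committer cannot convey its commitment once the edge is severed. The construction of $\alpha$ and $\beta$ and the verification that $v$'s view is identical in both are correct.

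The gap is in the final step. Your claim that $u$ ``never commits'' in $\beta$ is not established by the descent sketch: it is unclear what quantity is being ``pushed strictly earlier'' or why the descent terminates in a contradiction. Concretely, in the \emph{two-node} system you work in, the protocol ``set the variable to $1$ at the first activation'' already satisfies both the safety condition (if one is $1$ the other eventually is) and the non-triviality condition you state (a persistent edge with infinite activations forces both to $1$). For that protocol $\tau=0$, and in your $\beta$ the node $u$, skipped at stage $0$, simply commits at its next activation from internal state---so $u$ \emph{does} commit in $\beta$ and no contradiction arises. More generally, iterating your cut-earlier idea drives both nodes to commit in $\gamma_k$ for every $k\ge 1$, but never produces an execution in which one commits and the other provably does not; the two-node instance admits the trivial ``always commit'' solution, so no indistinguishability argument confined to two nodes can succeed.

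To obtain an actual impossibility you must either (i) move to a larger dynamic network where the identity of the node at a given port can change, making a blind commit unsafe, or (ii) strengthen the problem specification so that committing entails some certified mutual knowledge (e.g., each endpoint has pulled the other's current-phase state), as is implicit in the synchronizer's use of the primitive. The paper's own proof is also informal on exactly this point---it tacitly relies on $u$ having to ``keep or revert'' its value, which is not part of the stated $0$--$1$ write-once specification---but it at least situates the scenario in a general dynamic graph rather than a bare two-node system.
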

    
    \begin{proof} 
    Consider a dynamic graph scenario under the {\sc Pull} model, with two nodes $u$ and $v$ connected by edge $(u,v)$ at some stage $t$. Assume w.l.o.g. that $u$ is activated at stage $t$ and
    that node $u$ decides it is safe to propose to $v$ that they should agree on the edge $(u,v)$. Node $u$ therefore sets the value of its edge-agreement variable to 1. Since we follow the Read-Compute mode of execution for our actions, node $v$ will only be able to read the value of 1 proposed by $u$ in a stage higher than $t$.
        Assume that the edge $(u,v)$ disconnected at a stage $t'>t$ and remained disconnected until node $u$ is activated
        again at stage $t''\geq t'>t$. 
        Since $u$  cannot pull any information from $v$ at stage $t''$, $u$ cannot decide whether $v$ has woken up during $[t
        +1, t'-1]$, and has therefore pulled the information on the agreement variable, or not. Thus, $u$ cannot decide whether it should keep its proposed value of 1 (to agree with $v$ in case it pulled this information from $u$ before the edge disconnected) or continue to wait for $v$ to accept its proposition.
        Since we are assuming arbitrary edge dynamics, it may be the case that the edge $(u,v)$ will never reconnect and hence node $u$ will not be able to decide at any finite stage on whether it can consider edge $(u,v)$ or not.

        A similar argument holds for the {\sc Push} model.
    \end{proof}

In an undirected dynamic graph model, synchronization relies on reaching agreement on the presence or absence of undirected edges between each pair of nodes, which leads to the following corollary.

\begin{corollary}
\label{cor:imposs}
Any deterministic synchronizer that relies on building consistent undirected graph snapshots 
cannot be executed in finite time in a non-synchronous environment under arbitrary edge dynamics in
the classic \textsc{Pull} or \textsc{Push} communication models.
\end{corollary}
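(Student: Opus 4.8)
The plan is to prove the corollary as a direct reduction from the edge-agreement problem, invoking Theorem~\ref{thm:imposs}. The central observation is that constructing a \emph{consistent} undirected snapshot $\synchGraphSnap_i$ is, for every pair of nodes that share an edge during the construction, exactly an instance of edge-agreement: the shared $0$--$1$ variable encodes the decision ``include edge $(u,v)$ in $\synchGraphSnap_i$,'' initialized to $0$ at both endpoints, and the consistency requirement ($u\in v.F^i \iff v\in u.F^i$, as captured in Lemma~\ref{lem:consistentNeighborhood}) is precisely the agreement rule that if one endpoint sets its variable to $1$ then the other must do so within finite time, and that otherwise both remain at $0$.

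First I would make this correspondence explicit. Given any deterministic synchronizer $S$ that builds consistent undirected snapshots in finite time under the classic \textsc{Pull} (or \textsc{Push}) model, I would extract, for a fixed pair $(u,v)$ and a fixed phase $i$, the induced protocol by which $u$ and $v$ decide whether $(u,v)\in \synchGraphSnap_i$. The ``commit to include'' event plays the role of setting the edge-agreement variable to $1$, while ``commit to exclude'' (or never committing) corresponds to the value $0$. Under this translation, a finite-time, consistency-preserving construction of $\synchGraphSnap_i$ yields a finite-time solution to edge-agreement on $(u,v)$.

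Next I would rule out the degenerate resolution in which every decision is $0$. If $S$ always excluded every edge, then each snapshot $\synchGraphSnap_i$ would be edgeless regardless of the actual dynamics, which cannot reproduce any synchronous execution that uses an edge and therefore violates the non-triviality requirement that the synchronizer preserves the set of possible synchronous executions. Hence there is a reachable configuration in which some endpoint, say $u$, commits to include $(u,v)$ --- i.e., sets its edge-agreement variable to $1$ --- at some stage $t$. At that point the scenario of Theorem~\ref{thm:imposs} applies verbatim: under arbitrary edge dynamics the adversary may disconnect $(u,v)$ at a stage $t'>t$, before $v$ has read $u$'s decision, and keep it disconnected forever; under \textsc{Pull} (respectively \textsc{Push}) neither endpoint can determine whether the other observed the commitment, so at least one of them cannot decide in finite time, contradicting that $S$ completes the snapshot in finite time.

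The main obstacle I expect is the non-triviality step: I must exclude the empty-snapshot synchronizer and then guarantee that the adversary can actually drive $S$ into a state where it commits to an edge. This requires appealing to the non-triviality guarantee (that an edge persisting sufficiently long must appear in $\synchGraphSnap_i$) and choosing the edge dynamics so that $(u,v)$ initially looks persistent --- prompting $u$'s commitment --- before the adversary severs it, exactly as in the proof of Theorem~\ref{thm:imposs}. The \textsc{Push} case follows by the symmetric argument noted there, and everything else is a routine translation between the synchronizer's commit/abstain decisions on $(u,v)$ and the edge-agreement variable semantics.
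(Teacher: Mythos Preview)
Your proposal is correct and follows essentially the same approach as the paper. The paper treats the corollary as an immediate consequence of Theorem~\ref{thm:imposs}, offering only the one-line justification that ``synchronization relies on reaching agreement on the presence or absence of undirected edges between each pair of nodes,'' together with the earlier remark that without edge-agreement ``all that the nodes would be able to guarantee to agree on in finite time is to never consider any edge in the network, but that would violate the non-triviality requirement''; your write-up spells out this reduction and the non-triviality step in more detail, but the argument is the same.
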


For completeness, we conclude by stating the memory overhead of our \synchronizer-Synchronizer, which trivially follows from the previous theorems and the fact that the node memory requirements of $\synchAlgo$ is $\Omega(\Delta)$, given that each node maintains $\Delta$ ports:

\begin{corollary}
     Let $R$ denote the worst-case running time for a synchronous execution of $\synchAlgo$. Then the synchronizer requires a memory overhead per node of $\Theta (\Delta + \log R)$. In particular, if $R$ is polynomial in $n$, the memory overhead of the synchronizer will asymptotically add a polylogarithmic in $n$ term on the memory requirements of $\synchAlgo$.
     \end{corollary}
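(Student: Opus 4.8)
The plan is to treat this corollary as a direct bookkeeping consequence of Table~\ref{tab:variables} together with the finite-termination guarantees of Theorem~\ref{thm:liveness} and Corollary~\ref{cor:main}, establishing the two directions of the $\Theta$ bound separately. For the upper bound I would enumerate every register the \synchronizer-Synchronizer maintains at a node $u$ that is not already part of the $\synchAlgo$-state, and bound each contribution. The only variable whose domain grows with the runtime is $u.\texttt{phase}$; by Corollary~\ref{cor:main} the number of phases executed equals the number of synchronous steps of $\synchAlgo$, which is at most $R$ in the worst case, so $u.\texttt{phase}\in\{0,\ldots,R\}$ and is representable with $O(\log R)$ bits. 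Every remaining synchronizer variable is either a single bit (\texttt{synch}) or is indexed by the at most $\Delta$ ports of $u$ and stores $O(1)$ bits per port: $\ell.\texttt{block}$ and $\ell.\texttt{ack}$ are single bits per port, and each of $I,P,F,D,\tilde D$ is a subset of the port set, hence storable as a $\Delta$-bit characteristic vector. Summed over the $\Delta$ ports these contribute $O(\Delta)$.

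The step I expect to require the most care is the accounting of the pulled state $X(\ell)$, since a naive reading would store a full neighbor phase (costing $O(\log R)$ bits) at each of the $\Delta$ ports and thereby produce a spurious $O(\Delta\log R)$ term. The point to make is that the $\synchAlgo$-portion of $X(\ell)$ is exactly the neighbor state that $\synchAlgo$ would pull in the {\sc Pull} model and is therefore charged to $\synchAlgo$, not to the overhead; and the synchronizer-portion of $X(\ell)$ need not be retained across activations. Inspecting Algorithm~\ref{alg}, the comparison of $X(\ell).\texttt{phase}$ against $u.\texttt{phase}$ (Lines~\ref{algline:lowerPhasePull} and~\ref{algline:phaseEqualLoop}) is recomputed from a fresh {\sc Pull} during each activation, so only its $O(1)$-bit outcome together with the pulled $X(\ell).\texttt{ack}$ bit is ever needed persistently per port. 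Hence $X$ adds only $O(1)$ synchronizer bits per port, i.e.\ $O(\Delta)$ in total, and the overall upper bound on the overhead is $O(\Delta+\log R)$.

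For the matching lower bound the counting is equally direct: the synchronizer genuinely maintains a phase counter that must range over $\Theta(R)$ distinct values to record which synchronous step a node is currently simulating, costing $\Omega(\log R)$ bits that $\synchAlgo$ itself need not hold; and it maintains a separate \texttt{block}/\texttt{ack} handshake bit for each of the up to $\Delta$ incident ports, since by Lemma~\ref{lem:consistentNeighborhood} the \texttt{ack-block} handshake must be resolved independently per edge within a phase, costing $\Omega(\Delta)$ bits. Combining both directions yields the $\Theta(\Delta+\log R)$ overhead.

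Finally, for the ``in particular'' clause I would invoke the stated fact that $\synchAlgo$ already requires $\Omega(\Delta)$ memory per node (one slot per port): the $\Theta(\Delta)$ portion of the overhead is thus absorbed into the base requirement up to a constant factor and contributes no asymptotic increase, so the only genuine asymptotic blow-up is the additive $O(\log R)$ term. When $R=\mathrm{poly}(n)$ we have $\log R=O(\log n)$, which is polylogarithmic in $n$, completing the claim.
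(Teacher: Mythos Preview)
Your argument is correct and is precisely the bookkeeping the paper only gestures at: the paper offers no proof beyond the sentence preceding the corollary, which says the bound ``trivially follows'' from the earlier theorems and the $\Omega(\Delta)$ baseline memory of $\synchAlgo$. One small imprecision worth noting: the comparison $X(\ell).\texttt{phase}$ versus $u.\texttt{phase}$ is not literally recomputed from a fresh \textsc{Pull} at each activation---the algorithm consults the stored value first to decide whether to re-pull---but your follow-up observation that only the $O(1)$-bit outcome of this comparison need persist per port is the correct justification and keeps the per-port synchronizer overhead at $O(1)$ bits, so the $O(\Delta+\log R)$ upper bound stands.
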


\section{Applications and Extensions}
\label{sec:appl}
Designing algorithms for highly dynamic networks without any assumptions on edge dynamics or eventual stabilization is challenging, even in synchronous settings. 
As a result, application scenarios for our synchronizer have proven to be fewer and more complex.
On the other hand, scenarios with high and unpredictable network dynamics are getting increasingly more common in practice, and practitioners are looking at the benefits of time synchronization in order to better manage the dynamics (see, e.g.,~\cite{google-spanner-2013}).  This work aims to bridge this divide.
We expect this work to spark more interest in the area of highly dynamic networks, without quiescence, and to provide a bridge that would encourage more researchers to work on applications under synchronous dynamic networks that can be ported to non-synchronous scenarios.

In this section, we describe a couple of concrete application scenarios where our synchronizer can make a meaningful impact, as we describe below. 
In a classic application of our \synchronizer -synchronizer, 
we consider the work in~\cite{barjon_casteigts_chaumette_johnen_neggaz_2014}, where they present a synchronous algorithm for maintaining a spanning forest that approximates the minimum possible number of spanning trees, in highly dynamic networks with no quiescence assumptions. Their work assumes the {\sc Push} model.
Given the simplicity and structure of the spanning forest maintenance algorithm, it can be seamlessly converted from the {\sc Push} to {\sc Pull} model.  
By executing this 
algorithm within the \synchronizer-Synchronizer, we extend its applicability to semi-synchronous environments, preserving correctness while enabling analysis of time and communication bounds under weaker assumptions.

Another application is in the context of \textit{minority dynamics}~\cite{clementi2024}, a stateless protocol in which each node samples a random subset of neighbors and adopts the minority opinion observed. The impact of our synchronizer is not just about enabling synchronous algorithms in semi-synchronous environments, but about providing an exponential speed-up in runtime when doing so. 
This dynamic exhibits dramatically different behavior under non-synchronous models when compared to synchronous ones. In asynchronous or sequential activation models---the sequential model being a special case of the semi-synchronous or asynchronous models---the expected convergence time is exponential in the number of nodes. In contrast, under synchronous models such as synchronous gossip, it converges in $O(\log^2 n)$ rounds with high probability. The protocol assumes 
a directed
dynamic communication model, 
eliminating the need for two nodes $u$ and $v$ to agree on the inclusion of edge $(u,v)$ in a given phase, and thus also the need for the 1-bit multi-writer atomic port registers.
Nevertheless, using a directed version of the  \synchronizer-Synchronizer  in this context would bring the run time of the minority dynamics algorithm of~\cite{clementi2024} in semi-synchronous environments under a random scheduler (including the case of a random sequential scheduler) and arbitrary edge dynamics to be a polynomial in the number of nodes, rather than exponential, ensuring
an overall multiplicative polylogarithmic memory overhead
for the synchronizer. Basically, each node  remembers its opinion at each phase $i$ (and there are $O(\log^2 n)$ such phases); if a node $u$ in phase $i$ pulls from a (subset of) neighbors $v$ during a stage, collecting phase $i$ opinions until it gets $k$ such opinions; at this point, it adopts the minority of the collected opinions and moves to phase $i+1$. 

In future work, we plan to investigate whether we can extend the $\delta$-Synchronizer also to work in asynchronous environments, determine its overhead in terms of runtime and bits exchanged, and investigate whether it can work under certain classes of  (non-stabilizing) network dynamics (e.g., edge recurrent, snapshot connected, etc.).

\section{Conclusion and Future Work}\label{sec:conclude}
We presented the first deterministic synchronizer for non-synchronous anonymous dynamic networks that makes no assumptions about edge dynamics or eventual network stabilization, distinguishing our work from the prior synchronizer literature. 
In future work, we plan to  implement and test our synchronizer on the applications outlined in Section~\ref{sec:appl} and others that we may find suitable. One major drawback of our synchronizer in that it requires that the underlying synchronous algorithm $\synchAlgo$ correctly work under arbitrary edge--dynamics, restricting the application of the synchronizer to several synchronous algorithms that only work under more constrained edge--dynamics (e.g., edge-recurrent, random dynamics, or the connected snapshot model). We plan to investigate whether our synchronizer, under a random scheduler, would be able to approximate edge-recurrent or random edge--dynamics synchronous environments. It is worth noting that our \synchronizer-Synchronizer is trivially able to simulate any degree-bound restrictions on the synchronous dynamic network, since 
$\synchAlgo$ assumes a  maximum node degree bound of $\Delta$ for  $\synchGraph$ if and only if our transformation  holds for  \synchronizer($\synchAlgo)$ under a $\NonsynchGraph$ restricted to having maximum degree $\Delta$ (this claim is also true for non-uniform node degree bounds, with a small modification of the algorithm).

Another indirect contribution of this work is that  we show that focusing on the ``persistent'' edges for synchronization in highly dynamic scenarios is powerful enough to guarantee the equivalence of the synchronous and semi-synchronous models in our transformation, a concept that had already been successfully explored in the context of local mutual exclusion in ~\cite{daymude2022-mutex}. 
Lastly, we plan to investigate whether we can adapt the transformation from semi-synchronous to asynchronous dynamic network environments of~\cite{daymude2022-mutex}, to extend the results in this paper to asynchronous scenarios.

\bibliographystyle{plain}
\bibliography{ref}

\appendix

\end{document}